\newif\iflong\longtrue
\newcommand\IV{$\Insert$\textsc{-Local \WBNSL}\xspace}
\newcommand\SW{$\Swap$\textsc{-Local \WBNSL}\xspace}
\newcommand{\W}[1]{\ensuremath{\mathrm{W}[#1]}\xspace}
\newcommand\NP{\ensuremath{\mathrm{NP}}\xspace}
\newcommand\FPT{\ensuremath{\mathrm{FPT}}\xspace}
\newcommand\XP{\ensuremath{\mathrm{XP}}\xspace}
\DeclareMathOperator{\Swap}{Swap}
\DeclareMathOperator{\Insert}{Insert}
\DeclareMathOperator{\Inv}{Inv}
\DeclareMathOperator{\improvable}{improvable}
\DeclareMathOperator{\Win}{Win}
\DeclareMathOperator{\WI}{WI}
\DeclareMathOperator{\CurrentScore}{currentScore}
\newcommand{\todog}{\todo[color=yellow!90!red]}
\newcommand{\prob}[3]{\begin{center}
	\begin{minipage}[c]{.9\linewidth}
          \textsc{#1}\\
          \textbf{Input}: #2\\
          \textbf{Question}: #3
	\end{minipage}
\end{center}}
\newtheorem{theorem}{Theorem}
\newtheorem{lemma}[theorem]{Lemma}
\newtheorem{definition}[theorem]{Definition}
\newtheorem{proposition}[theorem]{Proposition}
\newtheorem{claim}{Claim}
\theoremstyle{definition}
\theoremstyle{definition}
\newtheorem*{claimproof}{\normalfont{\textit{Proof}}}
\definecolor{myred}{rgb}{1,0.25,0.25}
\newcommand{\Oh}{\mathcal{O}} 
\newcommand{\Mo}{\mathcal{M}} 
\newcommand{\Fa}{\mathcal{F}} 
\newcommand{\Pa}{\mathcal{P}} 
\newcommand{\w}{\omega}
\newcommand{\Con}{\mathcal{C}}
\newcommand{\score}{\text{\normalfont{score}}}
\newcommand{\poly}{\text{\normalfont{poly}}}
\newcommand{\WBNSL}{\textsc{W-BNSL}\xspace}
\newcommand{\MWBNSL}{\textsc{$\Inv$-Local \WBNSL}\xspace}
\newcommand{\wMWBNSL}{\textsc{${\rm InvWin}$-Local W-BNSL}\xspace}
\begin{document}

\title{Efficient Bayesian Network Structure Learning via\\ Parameterized Local Search on Topological Orderings}
\date{ }
\author {
    Niels Grüttemeier,
    Christian Komusiewicz {\normalfont and}
    Nils Morawietz\thanks{Supported by the Deutsche Forschungsgemeinschaft (DFG), project OPERAH, KO~3669/5-1.}  \\
}

\maketitle
\begin{abstract} In Bayesian Network Structure Learning (BNSL), one is given a variable
  set and parent scores for each variable and aims to compute a DAG, called Bayesian
  network, that maximizes the sum of parent scores, possibly under some structural
  constraints. Even very restricted special cases of BNSL are computationally hard, and, thus,
  in practice heuristics such as local search are used. A natural approach for a local search algorithm is a hill climbing strategy, where one replaces a given BNSL solution by a better solution within some
  pre-defined neighborhood as long as this is possible. We study ordering-based local search, where a
  solution is described via a topological ordering of the variables. We show that given
  such a topological ordering, one can compute an optimal DAG whose ordering is within
  inversion distance~$r$ in subexponential FPT time; the parameter~$r$ allows to balance between solution quality and running
  time of the local search algorithm. This running time bound can be achieved for BNSL without structural constraints and for all structural constraints that can be expressed via a sum
  of weights that are associated with each parent set. We also introduce a related distance called \emph{window inversions distance} and show that the corresponding local search problem can also be solved in subexponential FPT time for the parameter~$r$.  For two further natural modification operations on the variable orderings, we show that algorithms with an FPT time for~$r$ are unlikely.
  We also outline the limits of ordering-based local search by showing that it cannot be used for common
  structural constraints on the moralized graph of the network.
\end{abstract}

\section{Introduction}
Bayesian networks are arguably the most popular and important model for representing dependencies between variables of multivariate probability distributions~\cite{Dar10}. A Bayesian network consists of a directed acyclic graph (DAG)~$D$ and a set of conditional probability tables, one for each vertex of~$D$. The vertices of the network are the variables of the distribution and each conditional probability table specifies the probability distribution of the corresponding vertex given the values of its parents. Thus, the value of each variable depends directly on the values of its parents. The graph~$D$ is called the structure of the network. Given a set of observed data over some variable set~$N$, one needs to learn the structure~$D$ from this data. This is usually done in two steps. First, for each variable~$v$ and each possible set of parents a \emph{parent score}~$f_v(P)$ is computed. Roughly speaking, the score represents how useful it is to choose this particular set of parents in order to accurately represent the probability distribution that created the observed data.   Second, using the computed parent scores, we aim to compute the DAG that maximizes the sum of the parent scores of its vertices. This problem, called \textsc{Bayesian Network Structure Learning} (BNSL), is NP-hard~\cite{C95}.

The current-best worst-case running time bound for exact algorithms for BNSL is~$2^n\cdot \poly(|I|)$ where~$n$ is the number of variables and~$|I|$ is the size of the instance. Clearly, this running time is impractical for larger variable sets. Moreover, even very restricted special cases of BNSL remain NP-hard, for example the case where both every possible parent set has constant size \iflong and every variable has only a constant number of parent sets\fi{}~\cite{OS13}. Finally, restricting the topology of the DAG to sparse classes such as graphs of bounded treewidth or bounded degree leads to NP-hard learning problems as well~\cite{KP13,KP15,GK20}. 

Due to this notorious hardness of BNSL, it is mostly solved using heuristics. One of the most
successful heuristic approaches relies on local search~\cite{TBA06}. More precisely, 
in this approach one starts with an
arcless DAG~$D$ and adds, removes, or reverses an arc while this results in an increased
network score. 
More recent local search approaches do not use DAGs as solution representations but rather orderings of the variables~\cite{AOP11,LB17,SCZ17}. This approach is motivated by the fact that given an ordering~$\tau$ of the variables, one may greedily find the optimal DAG~$D$ among all DAGs for which~$\tau$ is a topological ordering~\cite{OS13}. Lee and van Beek~\cite{LB17} study a local search approach that is based on an operation, where one interchanges the positions of two consecutive vertices on the ordering as long as this results in an improvement of the network score. We refer to this operation as an~\emph{inversion}. Alonso-Barba et al.~\cite{AOP11} an \emph{insert} operation, where one removes a vertex from the ordering and inserts this vertex at a new position. Scanagatta et al.~\cite{SCZ17} considered inversions, insertions, and a further generalized operation where one removes a larger block of consecutive vertices from the ordering and inserts the block of consecutive vertices at a new position.

\begin{figure}
\begin{center}
\begin{tikzpicture}
\tikzstyle{knoten}=[circle,fill=white,draw=black,minimum size=6pt,inner sep=0pt]
\tikzstyle{bez}=[inner sep=0pt]


\draw[rounded corners, fill=black!10, draw=black!10] (-1.5, -0.7) rectangle (5, 0.5) {};
\node[bez] at (-1,0) {$\tau$};
\node[knoten,label=below:$v_1$] (v1) at (0,0) {};
\node[knoten,label=below:$v_2$] (v2) at (1,0) {};
\node[knoten,label=below:$v_3$] (v3) at (2,0) {};
\node[knoten,label=below:$v_4$] (v4) at (3,0) {};
\node[knoten,label=below:$v_5$] (v5) at (4,0) {};

\begin{scope}[xshift=-4cm,yshift=-3cm]
\draw[rounded corners, fill=black!10, draw=black!10] (-2, -1) rectangle (5, 1.5) {};
\node[bez] at (1.5,1) {Swap};
\node[bez] at (-1,0) {$\sigma_1$};
\node[knoten,label=below:$v_3$] (v1) at (0,0) {};
\node[knoten,label=below:$v_5$] (v2) at (1,0) {};
\node[knoten,label=below:$v_1$] (v3) at (2,0) {};
\node[knoten,label=below:$v_4$] (v4) at (3,0) {};
\node[knoten,label=below:$v_2$] (v5) at (4,0) {};

\draw[<->, line width=1pt, densely dotted, bend left=30]  (v1) to (v3);
\draw[<->, line width=1pt, densely dotted, bend left=20]  (v2) to (v5);
\end{scope}

\begin{scope}[xshift=4cm,yshift=-3cm]
\draw[rounded corners, fill=black!10, draw=black!10] (-2, -1) rectangle (5, 1.5) {};
\node[bez] at (1.5,1) {Inversions};
\node[bez] at (-1,0) {$\sigma_2$};
\node[knoten,label=below:$v_2$] (v1) at (0,0) {};
\node[knoten,label=below:$v_3$] (v2) at (1,0) {};
\node[knoten,label=below:$v_1$] (v3) at (2,0) {};
\node[knoten,label=below:$v_4$] (v4) at (3,0) {};
\node[knoten,label=below:$v_5$] (v5) at (4,0) {};

\draw[<->, line width=1pt, densely dotted, bend left=30]  (v1) to node[above]{\tiny{$1$}} (v2);
\draw[<->, line width=1pt, densely dotted, bend left=30]  (v2) to node[above]{\tiny{$2$}} (v3);
\end{scope}

\begin{scope}[xshift=-4cm,yshift=-6cm]
\draw[rounded corners, fill=black!10, draw=black!10] (-2, -1) rectangle (5, 1.5) {};
\node[bez] at (1.5,1) {Insert};
\node[bez] at (-1,0) {$\sigma_3$};
\node[knoten,label=below:\textcolor{black!50}{$v_1$}, dotted] (v1) at (0,0) {};
\node[knoten,label=below:$v_2$] (v2) at (1,0) {};
\node[knoten,label=below:$v_3$] (v3) at (2,0) {};
\node[knoten,label=below:$v_1$] (newv1) at (2.5,0) {};
\node[knoten,label=below:\textcolor{black!50}{$v_4$}, dotted] (v4) at (3,0) {};
\node[knoten,label=below:$v_5$] (v5) at (4,0) {};
\node[knoten,label=below:$v_4$] (newv4) at (4.5,0) {};

\draw[->, line width=1pt, densely dotted, bend left=20]  (v1) to (newv1);
\draw[->, line width=1pt, densely dotted, bend left=30]  (v4) to (newv4);
\end{scope}

\begin{scope}[xshift=4cm,yshift=-6cm]
\draw[rounded corners, fill=black!10, draw=black!10] (-2, -1) rectangle (5, 1.5) {};
\draw[rounded corners, fill=black!5, draw=black] (-0.3, -0.7) rectangle (2.3, 0.5) {};
\draw[rounded corners, fill=black!5, draw=black] (2.7, -0.7) rectangle (4.3, 0.5) {};
\node[bez] at (1.5,1) {Inversion-Window};
\node[bez] at (-1,0) {$\sigma_4$};
\node[knoten,label=below:$v_2$] (v1) at (0,0) {};
\node[knoten,label=below:$v_3$] (v2) at (1,0) {};
\node[knoten,label=below:$v_1$] (v3) at (2,0) {};
\node[knoten,label=below:$v_5$] (v4) at (3,0) {};
\node[knoten,label=below:$v_4$] (v5) at (4,0) {};

\draw[<->, line width=1pt, densely dotted, bend left=30]  (v1) to (v2);
\draw[<->, line width=1pt, densely dotted, bend left=30]  (v2) to (v3);

\draw[<->, line width=1pt, densely dotted, bend left=30]  (v4) to (v5);
\end{scope}

\end{tikzpicture}
\end{center}
\caption{Examples of the distances studied in this work. The upper part shows an ordering~$\tau$ of the vertices~$v_1, \dots, v_5$. The lower part shows orderings~$\sigma_1, \dots, \sigma_4$, that have distance~$2$ from~$\tau$ with respect to the swap-, inversions-, insert-, or inversion-window distance. The dotted arrows correspond to the operations performed to obtain the ordering~$\sigma_i$ from~$\tau$. In case of the swap distance,~$\sigma_1$ is obtained from~$\tau$ by swapping the vertex pairs marked in the figure. To obtain the ordering~$\sigma_2$, a first inversion swapped the positions of~$v_1$ and~$v_2$, and afterwards, a second inversion swapped the positions of~$v_1$ and~$v_3$. In case of the insert distance, $\sigma_3$ is obtained by removing the vertices~$v_1$ and~$v_4$ and inserting them at new positions. For the inversion-window distance, there are two windows such that at most two inversions are performed inside each window to obtain the ordering~$\sigma_4$ from~$\tau$.}\label{Figure: Distances}
\end{figure}
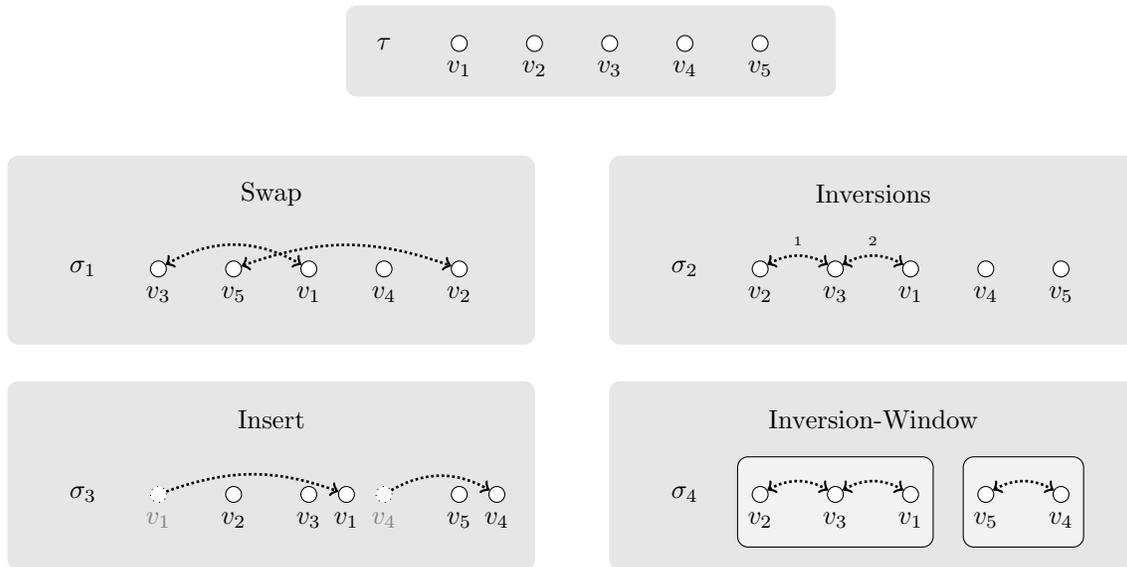

\paragraph{Our Results.} We study different versions of local search on variable orderings. In contrast to previous work that defined the local neighborhood of an ordering as all orderings that can be reached via \emph{one} operation, we consider \emph{parameterized} local search~\cite{FLRS10, MS10, GJOSS12, FFLRSV12, OS13}. Here, one sets a parameter~$r$ and aims to find a better network that can be reached via at most~$r$ modifications of the ordering. The hope is that, by considering such a larger neighborhood, one may avoid being stuck in a bad local optimum. We consider three different kinds of operations in this work. We first study \emph{insertions}, where one may move one variable to an arbitrary position in the ordering and \emph{swaps} where two arbitrary variables may exchange their positions. Recall that local search strategies based on one insertion have been studied previously~\cite{AOP11,SCZ17}. We observe that the local search problem can be solved in polynomial time for both variants if~$r$ is constant. The degree of the polynomial, however, depends on~$r$. We show that, under widely accepted complexity-theoretic assumptions, this dependence cannot be avoided. Afterwards, we study inversions, which are swaps of adjacent vertices. Our main result is an algorithm with running time~$2^{\Oh(\sqrt{r}\log r)}\cdot \poly(|I|)$ for deciding for a given variable ordering, whether there is a better ordering that can be reached via at most~$r$ inversions. The distance that measures the minimum number of inversions needed to transform one ordering in another is also known as the \emph{Kendall tau distance} as it is an adaption of Kedall tau rank correlation~\cite{K38}. We then introduce a new distance that we call \emph{inversion-window distance}. Intuitively, given an ordering~$\tau$ we find an ordering~$\tau'$ that has inversion-window distance at most~$r$ by partitioning~$\tau$ into multiple windows of consecutive vertices and performing up to~$r$ inversions inside each window. This new distance extends the number of inversions in the following sense: Given a search radius~$r$, the search space for orderings with inversion-window distance~$r$ is potentially larger than the number of orderings one can obtain with at mots~$r$ inversions. We provide an algorithm with running time~$2^{\Oh(\sqrt{r}\log r)}\cdot \poly(|I|)$ that decides for a given ordering, whether there is a better ordering that has inversion-window distance at most~$r$. An overview of the distances studied in this work is shown in Figure~\ref{Figure: Distances}.

Our algorithms work not only for the standard BNSL problem but also for some structural
constraints that we may wish to impose on the final DAG.  To formulate the algorithms compactly,
we introduce a generalization of BNSL, where each possible parent set
is associated with a score and a weight and the aim is to find the highest scoring network
that does not exceed a specified weight bound. We show that this captures natural types of
structural constraints.  As a side result, we show that a previous polynomial-time
algorithm for acyclic directed superstructures~\cite{OS13} can be generalized to this new more general
problem.

Finally, we show that for using an ordering-based local search approach in the presence of structural constraints it is essentially necessary, that the corresponding variant of BNSL is polynomial-time solvable on instances where the directed super structure is acyclic. This implies for several important structural constraints that ordering-based local search is unlikely to be useful.

\section{Preliminaries}
\paragraph{Notation.} We consider directed graphs~$D=(N,A)$ which consist of a vertex set~$N$ and an arc set~$A \subseteq N \times N$. If~$D$ contains no directed cycle, then~$D$ is called a \emph{directed acyclic graph (DAG)}. 
An arc~$(u,v) \in A$ is~\emph{outgoing from~$u$} and~\emph{incoming into~$v$}. 
A \emph{source} is a vertex without incoming arcs, and a \emph{sink} is a vertex without outgoing arcs. The set~$P^A_v:=\{u \in N \mid (u,v) \in A\}$ is the~\emph{parent set of~$v$}, and for every~$u \in P^A_v$,~$v$ is called \emph{child of~$u$}. Throughout this work let~$n:=|N|$. Given~$f:X \rightarrow Y$ and~$X' \subseteq X$, we let~$f|_{X'}:X' \rightarrow Y$ denote the \emph{limitation of~$f$ to~$X'$} which is defined by~$f|_{X'}(x):= f(x)$ for every~$x \in X'$. We let~$[i,j]$ denote the set of integers~$\ell$ such that~$i\le \ell\le j$.

\paragraph{Orderings.}  Given a vertex set~$N$, an \emph{ordering of~$N$} is an~$n$-tuple~$\tau=(v_1, \dots, v_n)$ containing every vertex of~$N$. For~$i \leq n$, we let~$\tau(i)$ denote the~$i$th vertex appearing on~$\tau$. We write~$u <_\tau v$ if the vertex~$u$ appears before~$v$ on~$\tau$. A~\emph{partial ordering of~$\tau$} is an ordering~$\sigma$ of a subset~$S \subseteq N$ such that~$u <_\tau v$ if and only if~$u <_\sigma v$ for all~$u,v \in S$. Given a vertex set~$S$, we let~$\tau[S]$ denote the partial ordering containing exactly the vertices from~$S$, and we let~$\tau - S:= \tau[N \setminus S]$ denote the partial ordering we obtain when removing all vertices of~$S$ from~$\tau$. For~$i \in [1,n]$ and~$j\in [i,n]$ we define~$\tau(i,j)$ as the partial ordering~$(\tau(i),\tau(i+1),\dots,\tau(j))$.
 Given a partial ordering~$\sigma$ of~$\tau$, we let~$N(\sigma)$ be the set of all elements appearing on~$\sigma$. 
 Moreover, we may use the abbreviation~$N_\tau(i,j):=N(\tau(i,j))$. 

A \emph{distance~$d$} \iflong of the orderings of~$N$ \fi is a mapping that assigns an integer to every pair of orderings~$\tau$ and~$\tau'$ of~$N$ such that~$d(\tau,\tau')= d(\tau', \tau)$ and~$d(\tau,\tau)=0$. For an integer~$r$, we say that an ordering~$\tau'$ is~\emph{$r$-close to~$\tau$ with respect to~$d$} if~$d(\tau, \tau') \leq r$. If~$\tau$ and~$d$ are clear from the context we may only write \emph{$\tau'$ is~$r$-close}.

 Let~$D:=(N,A)$ be a directed graph. An ordering~$\tau$ of~$N$ is a \emph{topological ordering of~$D$} if~$u<_{\tau}w$ for every arc~$(u,w) \in A$. A directed graph has a topological ordering if and only if it is a DAG.

\paragraph{Parameterized Complexity.}  A problem is \emph{slicewise polynomial} (\XP) for some parameter~$k$ if it can be solved in time~$|I|^{f(k)}$ for a computable function~$f$. That is, the problem is solvable in polynomial~time when~$k$ is constant. A problem is called~\emph{fixed-parameter tractable}~(\FPT) for a parameter~$k$ if it can be solved in time~$f(k) \cdot |I|^{\Oh(1)}$ for a computable function~$f$.
In general,~\FPT running times are preferable, because the degree of the polynomial is independent from the value of~$k$, whereas in the case of~\XP running times, the degree of the polynomial might depend on the value of~$k$.
 If a problem is \W1-hard, then it is assumed to be fixed-parameter \emph{intractable}. 
 The \textsc{Clique} problem when parameterized by size of the sought clique is an example for a \W1-hard problem. 
 For a detailed introduction into parameterized complexity we refer to the standard monographs~\cite{CFKLMPPS15,DF13}.

\section{Bayesian Network Structure Learning with Multiscores}
In this work we provide local search algorithms for \textsc{Bayesian Network Structure Learning (BNSL)} where one aims to find a DAG that maximizes some score. 
The presented algorithms also work for some generalizations of BNSL where one aims for example to find a DAG with a restricted number of arcs. To capture these generalizations we introduce the problem~\textsc{Weighted Bayesian Network Structure Learning (W-BNSL)}. 
 
Let~$N$ be a set of vertices. A mapping~$\Fa$ is a collection of \emph{local multiscores for~$N$} if~$\Fa(v) \subseteq 2^{N \setminus \{v\}} \times \mathds{N}_0 \times \mathds{N}_0$ for each~$v\in N$. 
Intuitively, if~$(P,s,\w)\in \Fa(v)$ for some vertex~$v \in N$, then choosing~$P$ as the parent set of~$v$ may simultaneously give a local score of~$s$ and a local weight of~$\w$.
For the same parent set~$P$, there might be another local multiscore~$(P,s',\w')\in \Fa(v)$ with a different local score and a different local weight.
Throughout this work we assume that for every vertex~$v$ there exists some~$s\in \mathds{N}_0$ such that~$(\emptyset, s,0) \in \Fa(v)$, that is, every vertex has a local multiscore for the empty parent set with weight zero. 
Given~$v \in N$ and~$\Fa$, the \emph{possible parent sets} are defined by~$\Pa_\Fa(v):=\{P \mid \exists s\in \mathds{N}_0:\exists \w \in \mathds{N}_0: (P,s,\w) \in \Fa(v)\}$. 
  Given~$N$ and~$\Fa$, the directed graph~$S_\Fa:= (N,A_\Fa)$ with~$A_\Fa:=\{(u,v) \mid \exists P \in \Pa_\Fa(v): u \in P\}$ is called the~\emph{superstructure} of~$N$ and~$\Fa$~\cite{OS13}.
  
\begin{definition} \label{Def Reasonable Arc Set}
Let~$N$ be a vertex set, let~$\Fa$ be multiscores for~$N$. 
An arc set~$A \subseteq N \times N$ is called~\emph{$\Fa$-valid} if~$(N,A)$ is a DAG and~$P^A_v \in \Pa_\Fa(v)$ for all~$v\in N$.
\end{definition}

We say that an~$\Fa$-valid arc set has weight at most~$k$ if for every~$v \in N$ there is some~$(P^A_v,s_v,\w_v) \in \Fa(v)$ such that~$\sum_{v\in N} \w_v \leq k$.
For a given integer~$k$ and an~$\Fa$-valid arc set~$A$ we define~$\score_{\Fa}(A,k):=\sum_{v \in N} s_v$ as the maximal score one can obtain from any choice of triples~$(P^A_v, s_v,\w_v) \in \Fa_v, v\in N$, with~$\sum_{v\in N} \w_v \leq k$.  If~$\Fa$ is clear from the context we may write~$\score(A,k):=\score_\Fa(A,k)$. We now formally define the problem. 

\prob{Weighted Bayesian Network Structure Learning (\WBNSL)}
{A set of vertices~$N$, local multiscores~$\Fa$, and two integers~$t,k \in \mathds{N}_0$.}
{Is there an $\Fa$-valid arc set~$A \subseteq N \times N$ such that~$\score(A,k) \geq t$?}

For~$N=\{v_1, \dots, v_n\}$ the local multiscores~$\Fa$  are given as a two-dimensional array~$\Fa:=[Q_1, \dots, Q_n]$, where each~$Q_i$ is an array containing a quadruple~$(s,\w, |P|, P)$ for each~$(P,s,\w)\in \Fa(v)$. The instance size~$|I|$ is the number of bits needed to store this two-dimensional array. Throughout this work we assume that~$k \in \poly(|I|)$ for every instance~$I:=(N,\Fa,t,k)$ of~\WBNSL. 
Note that this also implies~$\w \in \poly(|I|)$ for each multiscore~$(P,s,\w)$. 

\WBNSL generalizes the classical \textsc{Bayesian Network Structure Learning (BNSL)}. In \textsc{BNSL} one is given a set~$N$ of vertices and one local score~$s$ for each pair consisting of a vertex~$v \in N$ and a possible parent set~$P \subseteq N \setminus \{v\}$ and the goal is to learn a DAG such that the sum of the local scores is maximal. 
Thus, \textsc{BNSL} can be modeled with local multiscores~$\Fa(v)$ containing triples~$(P,s,0)$. 
Since \textsc{BNSL} is NP-hard~\cite{C95} and the weights~$\omega$ are not used for this construction, \WBNSL is NP-hard even if~$k=0$.

\WBNSL also allows to model the task of Bayesian network structure learning under additional sparsity constraints: Recall that one example for such a constrained version is \textsc{Bounded Arcs-BNSL (BA-BNSL)}. In \textsc{BA-BNSL} one aims to learn a DAG that consists of at most~$k$ arcs for some given integer~$k$. This can be modeled with multiscores containing triples~$(P,s,|P|)$. \textsc{BA-BNSL} is known to be fixed-parameter tractable when parameterized by the number~$k$ of arcs~\cite{GK20}.

A further example is~\textsc{Bounded Indegree $c$-BNSL (BI-$c$-BNSL)} which is defined for every constant~$c$. 
In \textsc{BI-$c$-BNSL} one aims to learn a network that contains at most $k$~vertices that have more than~$c$ parents for a given integer~$k$. 
This scenario can be modeled with triples~$(P,s,\w)$ where~$\w=1$ if~$|P| > c$ and~$\w=0$, otherwise. 
To the best of our knowledge, this is a sparsity constraint that has not been analyzed so far. 
Next, we observe that \WBNSL is solvable in polynomial time if the superstructure is a DAG. This generalizes algorithms for~\textsc{BNSL}~\cite{OS13} and~\textsc{BA-BNSL}~\cite{GK20}. 
\begin{theorem}\label{thm poly if dag}
\WBNSL is solvable in $\Oh(k\cdot |I|)$~time if~$S_\Fa$ is a DAG.
\end{theorem}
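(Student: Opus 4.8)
The plan is to exploit the fact that when $S_\Fa$ is a DAG the acyclicity requirement on an $\Fa$-valid arc set becomes vacuous, so that the parent-set choices at the individual vertices decouple and the problem collapses to a resource-constrained selection problem that can be solved by a knapsack-style dynamic program.

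First I would establish the crucial structural observation: if $A$ is any $\Fa$-valid arc set, then every arc $(u,v) \in A$ satisfies $u \in P^A_v \in \Pa_\Fa(v)$ and hence $(u,v) \in A_\Fa$; thus $A \subseteq A_\Fa$. Since a subgraph of a DAG is again a DAG, $(N,A)$ is automatically acyclic whenever $S_\Fa$ is a DAG. Conversely, any family of parent-set choices $P_v \in \Pa_\Fa(v)$, one per vertex, yields the arc set $A := \{(u,v) : u \in P_v\}$ with $P^A_v = P_v$, which is therefore $\Fa$-valid. Hence $\Fa$-valid arc sets correspond exactly to independent choices of a triple $(P_v, s_v, \w_v) \in \Fa(v)$ for each $v \in N$, and by the definition of $\score$, finding an optimal solution amounts to choosing one triple per vertex so as to maximize $\sum_{v} s_v$ subject to $\sum_{v} \w_v \le k$.

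Next I would solve this selection problem by dynamic programming over an arbitrary enumeration $v_1, \dots, v_n$ of $N$ and the available weight budget. Let $D_i(w)$ denote the maximum total score obtainable by selecting a triple for each of $v_1, \dots, v_i$ with combined weight at most $w$, for $w \in [0,k]$; set $D_0(w) := 0$. The recurrence reads $D_i(w) = \max \{ D_{i-1}(w - \w) + s : (P, s, \w) \in \Fa(v_i),\ \w \le w \}$, where the maximum is always well defined because every $\Fa(v_i)$ contains a weight-zero triple $(\emptyset, s, 0)$. The answer to the instance is then ``yes'' if and only if $D_n(k) \ge t$, and an optimal arc set can be recovered by standard back-tracking through the table.

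For the running time, I would fill the table row by row. Computing all entries $D_i(\cdot)$ from $D_{i-1}(\cdot)$ requires, for each of the $k+1$ budget values $w$, one scan over the triples in $\Fa(v_i)$, i.e.\ $\Oh(k \cdot |\Fa(v_i)|)$ time. Summing over all vertices and using that the total number of stored triples is $\Oh(|I|)$ yields the claimed bound $\Oh(k \cdot |I|)$. The main obstacle is not the algorithm itself but the structural argument that licenses it: one has to verify carefully that the acyclicity constraint is genuinely free, since this is exactly what decouples the vertices and turns a global combinatorial constraint into an independent per-vertex choice. Everything after that is a routine pseudo-polynomial knapsack computation, which runs in polynomial time precisely because of the standing assumption $k \in \poly(|I|)$.
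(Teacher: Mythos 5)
Your proposal is correct and is essentially the paper's own algorithm: a knapsack-style dynamic program over the vertices and the remaining weight budget, running in $\Oh(k\cdot|I|)$ time. The only (harmless) difference is that the paper processes vertices along a topological ordering of~$S_\Fa$ and omits the correctness argument, whereas you use an arbitrary enumeration and make explicit the key observation that any $\Fa$-valid arc set is a subgraph of the acyclic superstructure, so the per-vertex choices decouple.
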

\begin{proof}
Let~$I=(N,\Fa, t, k)$ be an instance of~\WBNSL where the superstructure~$S_\Fa$ is a DAG and let~$\tau$ be a topological ordering of~$S_\Fa$.
We describe a dynamic programming algorithm to solve~$I$.
 The dynamic programming table~$T$ has entries of type~$T[i,k']$ with~$i \in [1, n+1]$ and $k' \in [0,k]$. 
 Each entry stores the maximal score of an arc set~$A$ of weight at most~$k'$ where only the vertices of~$N_\tau(i,n)$ are allowed to learn a non-empty parent set and only the local multiscores of the vertices of~$N_\tau(i,n)$ count towards the score and weight of~$A$.
 We start to fill the table~$T$ by setting~$T[n+1,k']:= 0$ for all~$k'\in [0,k]$.
 The recurrence to compute an entry for~$i\in[1,n]$ and~$k'\in[0,k]$ is
 \begin{align*}
T[i,k'] := \max_{\substack{(P,s,\w) \in \Fa(v)\\\w \leq k'}} s + T[i+1, k' - \w].
\end{align*}
where~$v:=\tau(i)$.
Thus, to determine if~$I$ is a yes-instance of~\WBNSL, it remains to check if~$T[1,k] \geq t$.
The corresponding network can be found via traceback.
The formal correctness proof is straightforward and thus omitted.

The dynamic programming table~$T$ consists of~$k+1$ entries for each~$v\in V$ and each such entry can be computed in $\Oh(|\Fa(v)|)$~time plus~$k+1$ entries which can be computed in~$\Oh(1)$~time. 
Hence, the total running time is~$\Oh(k \cdot |I|)$. 
Consequently, the algorithm runs in polynomial time since~$k \in \poly(|I|)$.
\end{proof}

Assume we are given an instance~$I:=(N,\Fa,t,k)$ of~\textsc{W-BNSL} and consider an arbitrary but fixed optimal solution~$A$ for~$I$.
Suppose, we are given some topological ordering~$\tau$ of~$(N,A)$ but not the arc set~$A$ itself. 
Then, Theorem~\ref{thm poly if dag} implies that we can solve~$I$ in polynomial time by restricting the possible parent sets to parent sets respecting the ordering~$\tau$. 
This gives rise to the ordering-based local search approach which we study in this work. 
More precisely, we consider a version of \WBNSL where one is additionally given an ordering of the vertex set and an integer~$r$ and aims to learn a DAG that has a topological ordering that is~$r$-close to the given ordering. 
For any fixed distance~$d$, this problem is formally defined as follows.
\begin{center}
	\begin{minipage}[c]{.9\linewidth}
          \textsc{$d$-Local W-BNSL}\\
          \textbf{Input}: 
A set of vertices~$N$, local multiscores~$\Fa$, an ordering~$\tau$ of~$N$, and three integers~$t, k,r \in \mathds{N}$.\\
          \textbf{Question}: Is there an~$\Fa$-valid arc set~$A$ such that~$\score(A,k) \geq t$ and~$(N,A)$ has a topological ordering~$\tau'$ that is~$r$-close to~$\tau$ with respect to~$d$?
	\end{minipage}
\end{center}

Let~$I$ be an instance of~\textsc{$d$-Local W-BNSL}.
We call an arc set~$A$ \emph{feasible for~$I$} if~$A$ is~$\Fa$-valid with weight at most~$k$ and~$(N,A)$ has a topological ordering that is~$r$-close to~$\tau$. For theoretical reasons~\textsc{$d$-Local W-BNSL} is stated as a decision problem since this allows us to prove the presented conditional lower bounds for some cases of~$d$. 
However, the algorithms presented in this work solve the corresponding optimization problem, where the input is an instance~$I:=(N,\Fa,\tau,k,r)$ and the task is to compute a feasible arc set that maximizes~$\score(A,k)$. 
Throughout this work, such an arc set is called a \emph{solution of~$I$}.
Note that not every feasible arc set is necessarily a solution.

\section{Preliminary Experiments} \label{Section: Experiments}

To assess whether parameterized local search is in
principle a viable approach to ordering-based Bayesian network
structure learning, we perform some preliminary experiments for the
standard BNSL problem with a particularly simple neighborhood.
Consider the \emph{window-distance}~$\Win$ where~$\Win(\tau,\tau')$ is the
distance~$r$ between the first and last position in which~$\tau$
and~$\tau'$ differ. Formally, for two orderings of length~$n$, we
define~$\Win(\tau,\tau'):=0$ if~$\tau=\tau'$ and otherwise
$$\Win(\tau,\tau'):=\max_{j\in [1,n]: \tau(j)\neq \tau'(j)} j -
\min_{i\in [1,n]: \tau(i)\neq \tau'(i)} i.$$

We now use a hill-climbing algorithm that combines the
\emph{window-distance} with insertion operations. That is, we say that
two orderings~$\tau$ and~$\tau'$ are~\emph{$r$-close}
if~$\Win(\tau,\tau')\le r$ or~$\tau$ can be obtained from~$\tau'$ via
one insertion operation. We say that an ordering~$\tau$ is~\emph{$r$-optimal}
if there is no ordering~$\tau'$ that is~$r$-close to~$\tau$ such that
the best DAG with topological ordering~$\tau'$ achieves a better score
than the best DAG with  topological ordering~$\tau$.

The algorithm to compute an~\emph{$r$-optimal} ordering works as
follows; see Figure~\ref{fig:pseudo} for the pseudocode. Given a start
ordering~$\tau$, we repeat the following steps until no further
improvement was found: First, check if some single insert operation
on~$\tau$ gives an improvement and apply it if this is the case.
Otherwise, slide a window of size~$r$ over the ordering~$\tau$ and
find a permutation for this window that optimizes the score of the
total ordering. Apply the permutation to the window if it leads to an
improved score. Repeat these two steps until no further improvement
has been found. To find the optimal permuation of the window one may
try all permutations of the window entries. We use a more efficient
dynamic programming algorithm; since we were not too interested in a
detailed running time evaluations in this preliminary experiment, we
omit further details on this dynamic programming algorithm.
\begin{figure}[t]
  
  \begin{algorithm}[H]\small
        \SetKwInOut{Input}{Input}
        \SetKwInOut{Output}{Output}
        \SetKw{Break}{break}
        \Input{A set of vertices~$N$, local multiscores~$\mathcal{F}$, an ordering~$\tau$ of~$N$ and an integer~$r$.}
        
        \Output{An $r$-optimal ordering~$\tau$.}

         $\CurrentScore := \score_{\mathcal{F}}(\tau)$\\
        $\improvable := $ true\\
        \While{$\improvable$}{
          $\improvable \leftarrow$ false\\
          \ForEach{$\tau'$ such that~$\tau$ can be obtained from~$\tau'$ via one insertion}
          {
            \If{$\CurrentScore < \score_{\mathcal{F}}(\tau')$}
            {$\CurrentScore \leftarrow \score_{\mathcal{F}}(\tau')$; $\tau\leftarrow \tau'$\\
              $\improvable \leftarrow$ true\\
              \Break
            }
          }
          \If{$\mathrm{not} \improvable$}{
            \For{$i= 1$ \KwTo $n-r$}{
              $\tau_i:= $ permutation of $\tau[i,i+r]$ such that~$\score_{\mathcal{F}}(\tau[1,i-1]\circ \tau_i\circ \tau[i+r+1,n])$ is maximum.\\
              $\tau':= \tau[1,i-1]\circ \tau_i\circ \tau[i+r+1,n]$\\
              \If{$\CurrentScore < \score_{\mathcal{F}}(\tau')$}
              {$\CurrentScore \leftarrow \score_{\mathcal{F}}(\tau')$; $\tau\leftarrow \tau'$\\
                $\improvable \leftarrow$ true\\
                \Break
                }
            }
          }
        }
    \Return $\tau$
  \end{algorithm}
  
  \caption{A local search algorithm combining two simple neighborhoods, herein~$\score_{\mathcal{F}}(\tau)$ for an ordering~$\tau$ denotes maximum score of any DAG~$D$ such that~$\tau$ is a topological ordering of~$D$.}
  \label{fig:pseudo}
\end{figure}
We performed an experimental evaluation of this algorithm on data sets provided at the GOBNILP~\cite{CB13}
homepage.\footnote{\url{https://www.cs.york.ac.uk/aig/sw/gobnilp/}}
Given an instance, we compute~$20$ random topological orderings. We ran experiments for
each~$r\in \{3,5,7,9, 11\}$ using the same 20 random orderings for a fair comparison.  

Table~\ref{Table: Experiments} shows for each instance and each~$r$ the average score of the computed~$r$-optimal orderings; Table~\ref{Table: Experiments2} shows the maximum score of the 20 computed~$r$-optimal orderings. For most of the
instances the best results are obtained for~$r\in \{9,11\}$ in terms of both average score and maximum score. Thus, the experiments show that it can be worthwhile to consider larger local search neighborhoods, demonstrated here for the combination of window distance with parameter~$r$ and insertion operation. It is notable that we see the positive effect of a larger search radius for the window distance even when the search neighborhood allows for another non-window operation. This shows that the positive effect of the larger search radius is not due to choosing a too restrictive search neighborhood in the first place. Finally, we remark that the running time bottleneck in our
preliminary experiments was not the combinatorial explosion in~$r$ but rather the slow implementation of the insert operation. 
 \begin{table}
	\caption{Results of the experiments for local search with inserts and with inversions inside a window of size~$r$. The boldface entries mark the maximal average score obtained for each instance.}
  \small \centering  
  \begin{tabular}{lrrrrr}
    
    \hline
    
    instance & $r=3$, avg & $r=5$, avg & $r=7$, avg & $r=9$, avg & $r=11$, avg \\
    \hline
    
alarm-10000 & -105308.81 & \textbf{-105300.16} & \textbf{-105300.16} & \textbf{-105300.16} & \textbf{-105300.16} \\

alarm-1000 & -11265.42 & -11265.16 & -11264.91 & -11264.96 & \textbf{-11264.60}\\

alarm-100 & -1357.26 & -1357.26 & -1357.26 & -1356.33 & \textbf{-1355.77}\\

asia-10000 & -22467.52 & \textbf{-22466.40} & \textbf{-22466.40} & -22467.52 & -22467.52\\

asia-1000 & \textbf{-2317.49} & \textbf{-2317.49} & \textbf{-2317.49} & \textbf{-2317.49} & \textbf{-2317.49} \\

asia-100 & -246.96 & -246.30 & \textbf{-245.81} & -246.96 & -246.96\\

carpo-10000 & -174451.06 & -174450.96 & -174450.96 & -174404.82 & \textbf{-174397.24}\\

carpo-1000 & -17747.31 & -17745.21 & -17746.77 & -17746.31 & \textbf{-17739.14}\\

carpo-100 & -1844.54 & -1844.54 & -1844.54 & -1843.20 & \textbf{-1842.42} \\

hailfinder-10000 & -498133.54 & -498133.54 & -498133.54 & \textbf{-498099.56} & \textbf{-498099.56}\\

hailfinder-1000 & -52508.02 & -52508.02 & -52508.02 & -52508.02 & \textbf{-52507.98}\\

hailfinder-100 & \textbf{-6021.58} & \textbf{-6021.58} & \textbf{-6021.58} & \textbf{-6021.58} & \textbf{-6021.58}\\

insurance-10000 & -133108.70 & -133086.47 & -133086.47 & \textbf{-133055.32} & \textbf{-133055.32}\\

insurance-1000 & -13931.21 & -13929.06 & -13924.03 & -13915.04 & \textbf{-13914.44}\\

    insurance-100 & -1695.65 & -1695.46 & -1694.27 & \textbf{-1693.40} & -1693.91\\
    
    kredit-family & -16702.23 & -16702.23 & -16700.58 & \textbf{-16697.58} & -16698.53 \\
    
 water-1000 & -13274.46 & -13274.46 & -13274.46 & -13274.16 & \textbf{-13270.91}\\
 
    water-100 & -1502.62 & -1502.62 & -1502.40 & -1502.14 & \textbf{-1501.85}\\
    
\hline
  \end{tabular}
  \label{Table: Experiments}
  \end{table}

  \begin{table}
  \caption{	Results of the experiments for local search with inserts and with inversions inside a window of size~$r$. The boldface entries mark the maximal maximum score obtained for each instance.}

    \begin{center}

  \begin{tabular}{lrrrrr}
  
      \hline
    
    instance & max & $r=5$, max & $r=7$,max & $r=9$, max & $r=11$, max  \\
    \hline
    
alarm-10000 & \textbf{-105226.51} & \textbf{-105226.51} & \textbf{-105226.51} & \textbf{-105226.51} & \textbf{-105226.51}\\

alarm-1000 &  \textbf{-11247.28} & \textbf{-11247.28} & \textbf{-11247.28} & \textbf{-11247.28} & \textbf{-11247.28}\\

alarm-100 & -1351.92 & -1351.92 & -1351.92 & -1351.01 & \textbf{-1350.55}\\

asia-10000 & \textbf{-22466.40} & \textbf{-22466.40} & \textbf{-22466.40} & \textbf{-22466.40} & \textbf{-22466.40}\\

asia-1000 &  \textbf{-2317.41} & \textbf{-2317.41} & \textbf{-2317.41} & \textbf{-2317.41} & \textbf{-2317.41}\\

asia-100 &  \textbf{-245.64} & \textbf{-245.64} & \textbf{-245.64} & \textbf{-245.64} & \textbf{-245.64}\\

carpo-10000 & -174269.97 & -174269.97 & -174269.97 & \textbf{-174137.03} & -174139.69\\

carpo-1000 & -17728.98 & -17728.98 & -17725.29 & -17724.56 & \textbf{-17724.05}\\

carpo-100 &  \textbf{-1839.16} & \textbf{-1839.16} & \textbf{-1839.16} & \textbf{-1839.16} & \textbf{-1839.16}\\

hailfinder-10000 & \textbf{-497730.35} & \textbf{-497730.35} & \textbf{-497730.35} & \textbf{-497730.35} & \textbf{-497730.35}\\

hailfinder-1000 &  \textbf{-52486.74} & \textbf{-52486.74} & \textbf{-52486.74} & \textbf{-52486.74} & \textbf{-52486.74}\\

hailfinder-100 & \textbf{-6019.47} & \textbf{-6019.47} & \textbf{-6019.47} & \textbf{-6019.47} & \textbf{-6019.47}\\

insurance-10000 & \textbf{-132968.58} & \textbf{-132968.58} & \textbf{-132968.58} & \textbf{-132968.58} & \textbf{-132968.58}\\

insurance-1000 & -13909.50 & -13888.03 & -13888.03 & \textbf{-13887.90} & -13888.58\\

    insurance-100 & -1689.90 & -1689.90 & -1689.90 & -1689.24 & \textbf{-1689.17}\\
    
    kredit-family & \textbf{-16695.67} & \textbf{-16695.67} & \textbf{-16695.67} & \textbf{-16695.67} & \textbf{-16695.67}\\
    
 water-1000 & \textbf{-13263.38} & \textbf{-13263.38} & \textbf{-13263.38} & \textbf{-13263.38} & \textbf{-13263.38}\\
 
    water-100 & -1501.26 & -1501.26 & -1501.26 & -1501.19 & \textbf{-1501.03}\\
    
\hline
  
  \end{tabular}
  \label{Table: Experiments2}
  \end{center}
\end{table}

\section{Parameterized Local Search for Insert and Swap Distances}
A \emph{swap operation} on two vertices~$v$ and~$w$ on an ordering~$\tau$ interchanges the positions of~$v$ and~$w$. The distance~$\text{Swap}(\tau,\tau')$ is the minimum number of swap operations needed to transform~$\tau$ into~$\tau'$.
An \emph{insert operation} on an ordering~$\tau$ removes one arbitrary vertex from~$\tau$ and inserts it at a new position. We define
$\text{Insert}(\tau,\tau')$ as the minimum number of insert operations needed to transform~$\tau$ into~$\tau'$.
This number can be computed as~$\text{Insert}(\tau,\tau') = |N| - \text{LCS}(\tau, \tau')$, where LCS$(\tau, \tau')$ is the length of the longest common subsequence of~$\tau$ and~$\tau'$.
That is, if~$\text{Insert}(\tau,\tau') = r$, then there is a subset~$S\subseteq V$ of size~$r$ such that~$\tau - S = \tau' - S$ and vice versa.

For both distances, local search approaches for BNSL have been studied previously~\cite{AOP11,LB17,SCZ17}. We now focus on the parameterized complexity regarding the parameter~$r$ which is the radius of the local search neighborhood. 
We first prove that \IV and \SW are \XP when parameterized by~$r$.
That is, both problems are solvable in polynomial time if~$r$ is a constant. 
The following algorithm is straight forward and simply computes for each possible ordering~$\tau'$ which is~$r$-close to~$\tau$ a feasible arc set~$A$ such that~$\tau'$ is a topological ordering of~$(N,A)$ and~$\score(A,k)$ is maximal.
The latter is done by applying~\Cref{thm poly if dag} after restricting the local multiscores to those that respect the ordering~$\tau'$.

\begin{theorem} \label{Theorem: XP for Insert}
\IV and \SW are solvable in~$n^{\Oh(r)}\cdot \poly(|I|)$~time.
\end{theorem}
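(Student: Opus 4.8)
The plan is to run the brute-force enumeration sketched just before the statement: loop over every ordering~$\tau'$ that is~$r$-close to~$\tau$, compute the best score of a DAG having~$\tau'$ as a topological ordering, and return the maximum. For a single fixed~$\tau'$ this inner computation is already available. I would build multiscores~$\Fa'$ from~$\Fa$ by discarding, for each vertex~$v$, every triple~$(P,s,\w)$ whose parent set~$P$ contains a vertex that does not precede~$v$ on~$\tau'$. Every~$\Fa'$-valid arc set then has all its arcs oriented forward along~$\tau'$, so~$\tau'$ is a topological ordering of it and the superstructure~$S_{\Fa'}$ is acyclic. Hence \Cref{thm poly if dag} applies and returns, in~$\Oh(k\cdot|I|)=\poly(|I|)$ time, the maximum of~$\score_{\Fa'}(A,k)$ over all such~$A$, which is exactly the best score achievable by a DAG whose topological ordering is~$\tau'$.

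The remaining task is to enumerate the~$r$-close orderings within the claimed budget, and here the two distances are handled by their structural characterizations. For \IV I would use the equality~$\Insert(\tau,\tau')=|N|-\mathrm{LCS}(\tau,\tau')$ recalled above: an ordering is~$r$-close exactly when there is a set~$S$ of at most~$r$ vertices with~$\tau-S=\tau'-S$. Thus I enumerate every~$S\subseteq N$ with~$|S|\le r$ and, for each, every way of interleaving the vertices of~$S$ back into the fixed partial ordering~$\tau-S$. The number of choices of~$S$ is~$\sum_{j\le r}\binom{n}{j}=n^{\Oh(r)}$, and for a fixed~$S$ the number of interleavings is at most~$n\cdot(n-1)\cdots(n-|S|+1)\le n^{r}$, so the whole family has size~$n^{\Oh(r)}$ and every~$r$-close ordering appears in it.

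For \SW the key observation is that~$r$ swaps move only few vertices: writing~$\pi$ for the permutation carrying~$\tau$ to~$\tau'$, the swap distance equals~$n$ minus the number of cycles of~$\pi$, and decomposing~$\pi$ into cycles shows that the vertices whose position differs between~$\tau$ and~$\tau'$ form a set of size at most~$2r$ whenever~$\Swap(\tau,\tau')\le r$. I would therefore enumerate every set of at most~$2r$ positions together with every permutation of the vertices sitting on them; this yields~$\sum_{j\le 2r}\binom{n}{j}\cdot(2r)!=n^{\Oh(r)}$ candidate orderings and covers all~$r$-close ones. Since the swap distance can be computed in polynomial time, I discard any candidate that is in fact farther than~$r$, so that only genuinely~$r$-close orderings are scored. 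Combining the~$n^{\Oh(r)}$ candidates with the~$\poly(|I|)$ cost per candidate gives the stated~$n^{\Oh(r)}\cdot\poly(|I|)$ bound.

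The only genuinely delicate point is the counting: correctness of the outer loop is immediate once one trusts the two distance characterizations, but the running-time claim stands or falls on bounding the number of~$r$-close orderings by~$n^{\Oh(r)}$. For \IV this is the interleaving count, and for \SW it is the estimate that at most~$r$ swaps move at most~$2r$ vertices, coming from the cycle decomposition; both are elementary but are the steps I would write out most carefully.
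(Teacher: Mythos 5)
Your proposal is correct and follows essentially the same route as the paper: enumerate the $n^{\Oh(r)}$ candidate orderings, restrict the multiscores to parent sets respecting each candidate so that the superstructure becomes acyclic, and invoke \Cref{thm poly if dag}. The only (harmless) deviation is in the swap case, where you enumerate sets of at most~$2r$ moved vertices via the cycle-decomposition bound and then filter by distance, whereas the paper directly enumerates the~$n^{\Oh(r)}$ collections of~$r$ swapped vertex pairs; both yield the same bound.
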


\iflong
\begin{proof} 
Given an instance~$I=(N,\Fa, \tau, t, k, r)$ of~
\textsc{Insert-Local \WBNSL}, we compute all subsets~$S\subseteq V$ of size~$r$ and all orderings~$\tau'$ with~$\tau' - S = \tau - S$ in~$n^{\Oh(r)}$ time.
For each such~$\tau'$, we compute the instance~$I'=(N, \Fa', t, k)$ of~\textsc{\WBNSL}, where~$\Fa'(v) := \{(P,s,\w)\in \Fa \mid \forall u\in P: u<_{\tau'} v\}$ for each~$v\in N$.
Intuitively,~$\Fa'$ is the collection of local multiscores restricted to parent sets that respect the topological ordering~$\tau'$.
Due to Theorem~\ref{thm poly if dag}, each of these instances can be solved in polynomial time since~$S_{\Fa'}$ is a DAG.
If one of these instances is a yes-instance, then there is a set of arcs~$A\subseteq N\times N$ such that~$\score_{\Fa}(A,k) \geq \score_{\Fa'}(A,k) \geq t$.
Consequently,~$I$ is a yes-instance of~
\textsc{Insert-Local \WBNSL}.
The converse clearly holds as well since we iterate over all possible choices of~$S$ and~$\tau'$.

The algorithm for \SW works analogously: We iterate over all~$n^{\Oh(r)}$ collections of~$r$ vertex pairs that swap their positions. For each such choice we restrict the local multiscores with respect to the corresponding ordering and apply the algorithm behind Theorem~\ref{thm poly if dag}.
\end{proof}
\fi
Note that the running time is polynomial for constant values of~$r$ but the degree of the polynomial depends on~$r$.
Hence, using this algorithm is impractical even for small values of~$r$. 
Next we show that there is little hope that this algorithm can be improved to a fixed-parameter algorithm by showing that both problems are \W1-hard when parameterized by~$r$.

\begin{theorem} \label{Theorem: W1-h}
\IV and \SW are \W1-hard when parameterized by~$r$ even if~$k = 0$,~$|\Fa(v)| \leq 2$ for all~$v\in N$,~$S_\Fa$ is a DAG, and every potential parent set has size at most two.
\end{theorem}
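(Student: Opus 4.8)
The plan is to establish W[1]-hardness via a parameterized reduction from \textsc{Clique}, which is the canonical W[1]-hard problem (and is explicitly mentioned in the preliminaries). The challenge is to encode a clique of size~$k$ in a graph~$G=(V,E)$ into an ordering-based local search instance where the search radius~$r$ is bounded by a function of~$k$ only (most naturally~$r = \Theta(k)$), while keeping the imposed restrictions: $k=0$ (no weight budget, so weights play no role), $|\Fa(v)|\le 2$, acyclic superstructure~$S_\Fa$, and parent sets of size at most two.

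\paragraph{Key idea for the encoding.} I would construct an ordering~$\tau$ together with a gadget so that reaching a better DAG via at most~$r$ insertions (respectively~$r$ swaps) forces the algorithm to move exactly the vertices corresponding to a chosen $k$-subset~$S\subseteq V$ out of their ``blocking'' positions, and the score can only be improved if~$S$ induces a clique. Concretely, for each vertex~$v\in V$ I would introduce a selector vertex~$x_v$ and, exploiting that every vertex already has~$(\emptyset,s,0)$ as a baseline multiscore, give~$x_v$ a single additional multiscore of the form~$(P_v, s_v, 0)$ where~$P_v$ is a two-element parent set whose vertices, in the initial ordering~$\tau$, lie \emph{after}~$x_v$. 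Such a parent set cannot be used under~$\tau$ (it would violate the topological constraint), so extracting its positive score requires moving~$x_v$ past both parents, costing a bounded number of operations per selected vertex. The edge structure of~$G$ is then baked into which parent sets are simultaneously realizable: I would arrange the positions so that two selectors~$x_u,x_v$ can \emph{both} collect their bonus within the shared radius~$r$ only when~$\{u,v\}\in E$, by letting non-adjacency force an ordering conflict (e.g.\ a directed path in the superstructure forcing~$x_u$ before some vertex that must come before~$x_v$, so that moving both past their targets would need more than~$r$ operations). The target score~$t$ is set to the total bonus of exactly~$k$ selectors, so a feasible improving ordering within radius~$r$ exists iff $G$ has a $k$-clique. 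To keep~$S_\Fa$ acyclic, I would make the superstructure arcs follow the initial left-to-right order~$\tau$, ensuring no directed cycle among potential parent arcs.

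\paragraph{Two distances, one construction.} For the insert distance, moving each selected~$x_v$ past its two targets is one insertion, so roughly~$r = O(k)$ insertions suffice exactly when a $k$-clique exists; the LCS characterization of~$\Insert$ given earlier lets me argue the distance bound cleanly by pointing to the set~$S$ of relocated selectors. For the swap distance I expect to reuse the same gadget but recalibrate~$r$, since a single relocation over a fixed-length block costs a bounded number of swaps; alternatively I may give a separate but structurally parallel construction. In both cases the crux is that the radius couples the selectors: the hardness does not come from any one move but from having to make~$k$ coordinated moves within a \emph{shared} budget~$r$, which is exactly what forces the clique structure.

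\paragraph{Main obstacle.} The delicate part will be calibrating the geometry of~$\tau$ and the cost of each relocation so that (i)~selecting a clique of size~$k$ is \emph{always} achievable within radius~$r=f(k)$, and (ii)~any improving move within radius~$r$ \emph{necessarily} corresponds to a clique, with no way to cheat by collecting bonuses via partial or interacting moves, all while respecting the severe restrictions $|\Fa(v)|\le 2$ and parent-set size~$\le 2$. In particular, ensuring that non-adjacent pairs genuinely cost more than the budget---rather than merely being suboptimal---requires the interference between selectors to be encoded purely through unavoidable ordering conflicts, since with~$k=0$ and tiny parent sets there is little room to penalize bad choices numerically. Verifying this tight two-sided correspondence between clique size and achievable radius is where the real work lies.
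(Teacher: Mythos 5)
Your starting point (a parameterized reduction from \textsc{Clique}) matches the paper's, but the core mechanism you propose for detecting the clique structure has a genuine gap, and the specific route you sketch cannot work for the insert distance. You plan to give each selected vertex its own bonus and to enforce adjacency by making the \emph{relocations} of two non-adjacent selectors jointly exceed the budget~$r$ via ``unavoidable ordering conflicts.'' But $\Insert(\tau,\tau') = |N| - \mathrm{LCS}(\tau,\tau')$: relocating any set of $k$ vertices to arbitrary positions costs exactly $k$ insertions, each move being a single independent operation. There is no way to make two specific relocations cost more than two insertions, so a distance-based conflict between a pair of selectors is impossible; the conflict \emph{must} be encoded in the score. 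Moreover, with one bonus per selected vertex, any $k$ selectors yield the same total bonus regardless of adjacency, so per-vertex bonuses alone cannot distinguish cliques from arbitrary $k$-sets. You also talk yourself out of the correct fix: the hypothesis $k=0$ in the theorem only kills the \emph{weight} budget, not the scores, so arbitrarily large integer scores are available and numerical penalties are exactly the right tool.

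The paper's construction resolves this with a per-\emph{edge} bonus and a per-vertex penalty. For each edge $e_i=\{u,v\}$ there is a collector vertex $w_i$ whose only non-empty potential parent set is $e_i$ itself (size two, score $k$); each graph vertex $v$ initially earns $\binom{k}{2}-1$ from the parent set $\{x_k\}$, which it forfeits when moved to the front of the ordering; and chains of vertices with huge score $n^9$ pin every non-$V$ vertex in place. Moving a set $S$ of $k$ graph vertices to the front changes the score by $|E(S)|\cdot k - |S|\bigl(\binom{k}{2}-1\bigr)$, which is at least $k$ if and only if $S$ is a $k$-clique --- a purely numerical counting argument, carried out entirely within the restrictions $|\Fa(v)|\le 2$, parent sets of size at most two, and an acyclic superstructure. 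The swap variant is then handled by adding $k$ dummy vertices at the front that must be swapped with the $k$ clique vertices. To repair your proof you would need to replace the ``ordering conflict'' mechanism with an edge-indexed bonus structure and a score trade-off of this kind.
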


\newcommand{\proofWhardness}{
$(\Rightarrow)$
Suppose that~$I$ is a yes-instance of~\textsc{Clique}.
Hence, there is a clique~$S\subseteq V$ of size~$k$ in~$G$.
Let~$\tau'$ be the permutation of~$N$ obtained by moving all the vertices of~$S$ in an arbitrary ordering to the beginning of~$\tau$. 
Thus,~$\text{Insert}(\tau, \tau') = k = r$.
Since~$S$ is a clique in~$G$, it follows that~$\{u,v\}\in E$ for all distinct vertices~$u,v\in S$.
Thus, we set~$A = (\widehat{A} \setminus \{(x_k, v)\mid v\in S\})\cup \{(u,w_i),(v,w_i)\mid e_i=\{u,v\}\in E(S)\}$. 
By construction,~$\tau'$ is a topological ordering for~$D=(N,A)$.
Moreover, it holds that~$\score(A)-\score(\widehat{A}) = |E(S)|\cdot k - k({k \choose 2}-1) ={k \choose 2} k - k({k \choose 2}-1) = k$ and, thus,~$\score(A) = t$.
Consequently,~$I'$ is a yes-instance of~\IV.

$(\Leftarrow)$
Suppose that~$I'$ is a yes-instance of~\IV.
Consequently, there is a set of arcs~$A\subseteq N\times N$ and a topological ordering~$\tau'$ for~$D=(N,A)$ such that~$\score(A) \geq  t = \score(\widehat{A}) + k$ and~$\text{Insert}(\tau, \tau')\leq r=k$.
By construction, the total possible score one can obtain from any arc set is at most~$n({k\choose 2}-1) + |A^*|n^9 + mk$. 
It follows that~$A^*\subseteq A$ as, otherwise,~$\score(A) \leq n({k\choose 2}-1) + (|A^*|-1)n^9 + mk < n({k\choose 2}-1) + |A^*|n^9 = \score(\widehat{A})$.

By the fact that~$\text{Insert}(\tau, \tau')\leq k$, there is a set of vertices~$S\subseteq N$ of size at most~$k$ such that~$\tau - S = \tau' - S$.
Since~$\score(A) \geq \score(\widehat{A}) + k$, it follows that there is a nonempty set of edges~$E'\subseteq E$ of~$G$ such that~$(u, w_i)\in A$ and~$(v, w_i)\in A$ for all~$e_i=\{u,v\}\in E'$.
Hence,~$w_i \in S$ or~$\{u,v\}\subseteq S$ for each~$e_i=\{u,v\}\in E'$ and thus~$S\cap (V \cup \{w_i\mid e_i\in E\}) \neq \emptyset$. 
We set~$S':= \{u,v\mid \{u,v\}\in E'\}$ and show that~$S' = S$ and that~$S'$ forms a clique of size~$k$ in~$G$. 
To this end, we show that~$(x_k, v)\not\in A$ for every~$v\in S'$.

Let~$e_i=\{u,v\}\in E'$. 
From the fact that~$(u, w_i)\in A$ and~$(v, w_i)\in A$, it follows that~$u<_{\tau'} w_i$ and~$v<_{\tau'} w_i$. 
Moreover, since~$A^*\subseteq A$ it holds that~$x_1<_{\tau'}x_2<_{\tau'}\dots<_{\tau'}x_k$ and~$w_i<_{\tau'}w_i^j$ for every~$j\in[1,k]$.
Hence,~$u<_{\tau'}x_k$ and~$v<_{\tau'}x_k$ as, otherwise,~$\{w_i^j\mid j\in[1,k]\}\subsetneq S$ or~$\{x_j\mid j\in[1,k]\}\subsetneq S$ which is impossible since~$|S|\leq k$ and~$S\cap (V \cup \{w_i\mid e_i\in E\}) \neq \emptyset$.
Consequently,~$(x_k, v)\not\in A$ for every~$v\in S'$.

Hence,~$\score(A)-\score(\widehat{A}) = |E'|\cdot k - |S'|({k \choose 2} - 1) \leq |E_G(S')|\cdot k - |S'|({k \choose 2} - 1)$.
Since~$\score(A) \geq \score(\widehat{A})+k$ and~$|S'|\leq |S|\leq k$ it follows that~$k \leq |E_G(S')|\cdot k - |S'|({k \choose 2} - 1)$ which is only possible if~$|S'| = k$ and~$|E_G(S')|={k \choose 2}$, since~$E_G(S') \subseteq {{S'}\choose 2}$.
Consequently,~$S'$ forms a clique of size~$k$ in~$G$ and, thus,~$I$ is a yes-instance of~\textsc{Clique}.
}

\begin{proof}
We describe a parameterized reduction from \textsc{Clique}. In \textsc{Clique} one is given an undirected graph~$G$ together with an integer~$k$ and the question is if~$G$ contains a clique of size~$k$, that is, a set of pairwise adjacent vertices. \textsc{Clique} is \W1-hard when parameterized by~$k$~\cite{DF13}.

Given an instance~$I=(G=(V,E),k)$ of~\textsc{Clique} where~$G$ has $n$~vertices and $m$~edges, we compute an equivalent instance~$I'=(N,\Fa, \tau, t, k', r)$ of~\IV in polynomial time.
Let~$V=\{v_1, \dots, v_n\}$, and let~$E=\{e_1, \dots, e_m\}$.
We start with an empty vertex set~$N$ and add for every edge~$e_i\in E$ the vertices~$w_i$ and~$\{w_i^1, \dots, w_i^k\}$. 
Moreover, we add the vertices of~$V$ and~$k$ additional vertices~$x_1, \dots, x_k$ to~$N$.

For every~$e_i\in E$, we set~$\Fa(w_i):=\{(e_i, k, 0)\}$ and~$\Fa(w^j_i):=\{(\{w_i\}, n^9, 0)\}$ with~$j\in[1,k]$. 
Further, we set~$\Fa(x_i):=\{(\{x_{i-1}\},n^9,0)\}$ for all~$i\in[2,k]$ and~$\Fa(v):=\{(\{x_{k}\},{k\choose 2}-1,0)\}$ for every vertex~$v\in V$.
Moreover, for each~$v\in N$, we also add~$(\emptyset,0,0)$ to~$\Fa(v)$.
Note that~$S_\Fa$ is a DAG.

Next, we describe the ordering~$\tau$ of the instance~$I'$. 
For each~$j\in [1,m]$, we set~$\tau_j:= (w_j, w_j^1, w_j^2, \dots, , w_j^k)$ and~$\tau := \tau_1 \cdot \tau_2 \cdot \ldots \cdot \tau_m \cdot(x_1, \dots, x_k, v_1, \dots, v_n).$

Finally, we set~$r:=k$,~$k':= 0$, and~$t := (mk + k-1) n^9 + n({k \choose 2}-1) + k$ which completes the construction of~$I'$.

Since each parent set has weight zero, we abbreviate in the following~$\score(X) := \score(X,k')$ for every~$X\subseteq N \times N$.
Let~$$A^* := \{(w_i,w_i^j)\mid e_i\in E, j\in[1,k]\} \cup \{(x_{i-1}, x_i)\mid i\in[2,k]\}$$ be the set of arcs for parent sets of score~$n^9$ and let~$\widehat{A}:=A^* \cup \{(x_k, v_i)\mid v_i \in V\}$ be the set of all arcs of parent sets with positive score that do not violate the topological ordering~$\tau$. 
By construction,~$\score(\widehat{A}) = t-k$. An example of the construction is shown in Figure~\ref{Figure: Example Reduction}

\begin{figure}
\begin{center}
\begin{tikzpicture}[scale=0.85,yscale=0.7]
\tikzstyle{knoten}=[circle,fill=white,draw=black,minimum size=6pt,inner sep=0pt]
\tikzstyle{bez}=[inner sep=0pt]

\node[knoten,label=below:$v_1$] (v1) at (-8,5) {};
\node[knoten,label=below:$v_2$] (v2) at (-6,4) {};
\node[knoten,label=below:$v_3$] (v3) at (-4,5) {};

\draw[-, line width=1pt]  (v1) to (v2);
\draw[-, line width=1pt]  (v3) to (v2);

\draw[rounded corners, fill=black!10, draw=black!10] (-11.5, -0.3) rectangle (-5.5, 0.7) {};
\node[bez] at (-13,0) {$\tau$};

\node[knoten,label=below:$w_1$ \vphantom{$w^1$}] (w1) at (-11,0) {};
\node[knoten,label=below:$w_1^1$] (w11) at (-10,0) {};
\node[knoten,label=below:$w_1^2$] (w12) at (-9,0) {};

\node[knoten,label=below:$w_2$ \vphantom{$w^1$}] (w2) at (-8,0) {};
\node[knoten,label=below:$w_2^1$] (w21) at (-7,0) {};
\node[knoten,label=below:$w_2^2$] (w22) at (-6,0) {};

\node[knoten,label=below:$x_1$ \vphantom{$w^1$}] (x1) at (-4.5,0) {};
\node[knoten,label=below:$x_2$ \vphantom{$w^1$}] (x2) at (-3.5,0) {};

\draw[rounded corners, fill=black!10, draw=black!10] (-2.5, -0.3) rectangle (0.5, 0.7) {};
\node[knoten,label=below:$v_1$ \vphantom{$w^1$}] (v1) at (-2,0) {};
\node[knoten,label=below:$v_2$ \vphantom{$w^1$}] (v2) at (-1,0) {};
\node[knoten,label=below:$v_3$ \vphantom{$w^1$}] (v3) at (0,0) {};

\draw[->, line width=1pt]  (w1) to (w11);
\draw[->, line width=1pt, bend left=25]  (w1) to (w12);

\draw[->, line width=1pt]  (w2) to (w21);
\draw[->, line width=1pt, bend left=25]  (w2) to (w22);

\draw[->, line width=1pt]  (x1) to (x2);

\draw[->, line width=1pt]  (x2) to (v1);
\draw[->, line width=1pt, bend left=20]  (x2) to (v2);
\draw[->, line width=1pt, bend left=22]  (x2) to (v3);

\begin{scope}[yshift=-4cm, xshift=2cm]
\draw[rounded corners, fill=black!10, draw=black!10] (-11.5, -0.3) rectangle (-5.5, 0.7) {};
\node[bez] at (-15,0) {$\tau'$};

\node[knoten,label=below:$w_1$ \vphantom{$w^1$}] (w1) at (-11,0) {};
\node[knoten,label=below:$w_1^1$] (w11) at (-10,0) {};
\node[knoten,label=below:$w_1^2$] (w12) at (-9,0) {};

\node[knoten,label=below:$w_2$ \vphantom{$w^1$}] (w2) at (-8,0) {};
\node[knoten,label=below:$w_2^1$] (w21) at (-7,0) {};
\node[knoten,label=below:$w_2^2$] (w22) at (-6,0) {};

\node[knoten,label=below:$x_1$ \vphantom{$w^1$}] (x1) at (-4.5,0) {};
\node[knoten,label=below:$x_2$ \vphantom{$w^1$}] (x2) at (-3.5,0) {};

\draw[rounded corners, fill=black!10, draw=black!10] (-2.5, -0.3) rectangle (-1.5, 0.7) {};
\node[knoten,label=below:$v_1$ \vphantom{$w^1$}] (v1) at (-2,0) {};
\node[knoten,label=below:$v_2$ \vphantom{$w^1$}] (v2) at (-12,0) {};
\node[knoten,label=below:$v_3$ \vphantom{$w^1$}] (v3) at (-13,0) {};

\draw[->, line width=1pt]  (w1) to (w11);
\draw[->, line width=1pt, bend left=25]  (w1) to (w12);

\draw[->, line width=1pt]  (w2) to (w21);
\draw[->, line width=1pt, bend left=25]  (w2) to (w22);

\draw[->, line width=1pt]  (x1) to (x2);

\draw[->, line width=1pt]  (x2) to (v1);

\draw[->,  line width=1pt, bend left=15]  (v3) to (w1);
\draw[->,  line width=1pt]  (v2) to (w1);

\draw[->, line width=1pt, bend left=25]  (v3) to (w2);
\draw[->, line width=1pt, bend left=20]  (v2) to (w2);
\end{scope}
\end{tikzpicture}
\end{center}
\caption{An example of the construction from the proof of Theorem~\ref{Theorem: W1-h}. The upper part shows the graph~$G$ of a clique instance~$(G,2)$. Below, there are the vertices of the~\IV instance~$I$, the ordering~$\tau$, and the arc set~$\widehat{A}$. The first six vertices on~$\tau$ correspond to the edge set of~$G$ and the last three vertices on~$\tau$ correspond to the vertex set of~$G$. The lower part shows a~$2$-close ordering~$\tau'$ together with an optimal arc set. The parent sets of~$w_1$ and~$w_2$ correspond to the clique on the vertices~$v_2$ and~$v_3$ in~$G$.}\label{Figure: Example Reduction}
\end{figure}
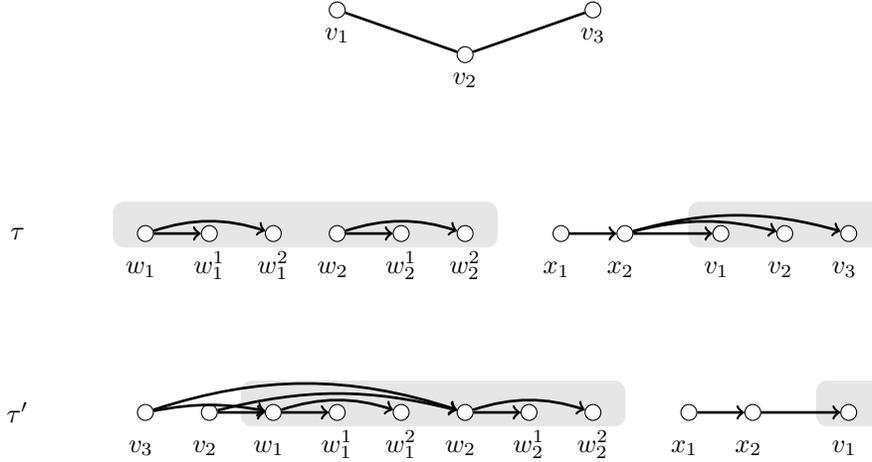

The idea is that every arc set~$A$ of score at least~$t$ has to contain all the arcs of~$A^*$. 
Moreover, if~$D=(N,A)$ has a topological ordering~$\tau'$ which is~$r$-close to~$\tau$, then at most~$r$ of the vertices of~$N$ change their position. 
Intuitively, vertices~$S\subseteq V$ should be inserted at the beginning of the new ordering~$\tau'$ so that for each edges~$e_i\in E(S)$, the vertex~$w_i$ can learn the parent set~$e_i$.
To obtain the maximal score, the number of edges in~$E(S)$ should be maximal, which is the case if and only if~$S$ is a clique of size~$k=r$ in~$G$. 

\iflong
Next, we show that~$I$ is a yes-instance of~\textsc{Clique} if and only if~$I'$ is a yes-instance of~\IV. Since~$k'=0$, we set~$\score(A):=\score(A,0)$ for ease of notation.

\proofWhardness
\else 
\begin{claim}\label{claim Whardness}
$I$ is a yes-instance of~\textsc{Clique} if and only if~$I'$ is a yes-instance of~\IV.
\end{claim}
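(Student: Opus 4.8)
The plan is to prove both directions of the equivalence, exploiting the gap between the huge score~$n^9$ carried by the arcs in~$A^*$ and the small scores (a bonus of~$k$ per learned edge, a penalty of~${k\choose 2}-1$ per displaced vertex of~$V$) to pin down the structure of any near-optimal solution.

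For the forward direction, suppose~$S\subseteq V$ is a clique of size~$k$. I would obtain~$\tau'$ from~$\tau$ by moving the~$k$ vertices of~$S$ to the front; this costs exactly~$k=r$ insertions, so~$\text{Insert}(\tau,\tau')\le r$. For the arc set I would start from~$\widehat{A}$, delete the arcs~$(x_k,v)$ for~$v\in S$ (which are no longer consistent with~$\tau'$, since these vertices now precede~$x_k$), and add, for every edge~$e_i=\{u,v\}$ with both endpoints in~$S$, the two arcs making~$w_i$ learn~$e_i$; this is possible precisely because now~$u,v<_{\tau'}w_i$. A direct score computation then shows that the net change over~$\widehat{A}$ is~$|E(S)|\cdot k - k({k\choose 2}-1) = {k\choose 2}k - k({k\choose 2}-1) = k$, so~$\score(A)=\score(\widehat{A})+k=t$, yielding a yes-instance of \IV.

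For the backward direction, I would first argue that any arc set~$A$ with~$\score(A)\ge t$ must contain all of~$A^*$: every arc in~$A^*$ carries score~$n^9$, whereas the combined contribution of all remaining positive-score arcs is at most~$n({k\choose 2}-1)+mk<n^9$, so dropping even one arc of~$A^*$ would push the score below~$t$. Containing~$A^*$ fixes the relative order of the~$x_i$ and of each~$w_i$ before its children~$w_i^j$ in any topological ordering~$\tau'$. Next, since~$\text{Insert}(\tau,\tau')\le k$, there is a displaced set~$S$ with~$|S|\le k$ and~$\tau-S=\tau'-S$. Because~$\score(A)\ge\score(\widehat{A})+k$, some nonempty set~$E'\subseteq E$ of edges must actually be learned (i.e.\ some~$w_i$ takes parent set~$e_i$), and for~$w_i$ to learn~$e_i=\{u,v\}$ one needs~$u,v<_{\tau'}w_i$, which, given~$w_i<_\tau u,v$, forces~$w_i\in S$ or~$\{u,v\}\subseteq S$.

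The heart of the argument, and the step I expect to be the main obstacle, is to rule out solutions that ``cheat'' by spending the insertion budget on the~$w$- or~$x$-vertices instead of on the vertices of~$V$, and only then to run a clean counting argument. Concretely, for~$S':=\{u,v : \{u,v\}\in E'\}\subseteq V$ I would show that no~$v\in S'$ can still learn~$x_k$, i.e.\ $(x_k,v)\notin A$: if some endpoint of a learned edge satisfied~$x_k<_{\tau'}u$, then~$x_k<_{\tau'}w_i<_{\tau'}w_i^j$ for all~$j$, which would force either all~$k$ vertices~$\{w_i^1,\dots,w_i^k\}$ or all~$k$ vertices~$\{x_1,\dots,x_k\}$ into~$S$, exhausting the budget and leaving~$S\cap(V\cup\{w_i \mid e_i\in E\})=\emptyset$, contradicting that~$E'$ is nonempty. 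Given this, every vertex of~$S'$ forfeits its~${k\choose 2}-1$ score, so the net gain over~$\widehat{A}$ is at most~$|E_G(S')|\cdot k - |S'|({k\choose 2}-1)$ while~$|S'|\le|S|\le k$. Combining~$k\le |E_G(S')|\cdot k - |S'|({k\choose 2}-1)$ with~$|E_G(S')|\le{|S'|\choose 2}$ then forces~$|S'|=k$ and~$|E_G(S')|={k\choose 2}$, so~$S'$ is a~$k$-clique, which completes the reduction.
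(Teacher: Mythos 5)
Your proposal is correct and follows essentially the same route as the paper's proof: the same modified ordering and arc set in the forward direction with the identical score computation, and in the backward direction the same chain of arguments ($A^*\subseteq A$ via the $n^9$ score gap, the displaced set~$S$, forcing~$w_i\in S$ or~$\{u,v\}\subseteq S$ for each learned edge, ruling out~$x_k<_{\tau'}u$ by showing the insertion budget would be exhausted by the~$w_i^j$'s or the~$x_j$'s, and the final counting argument forcing~$|S'|=k$ and~$|E_G(S')|=\binom{k}{2}$). The only cosmetic difference is that you phrase the budget-exhaustion step via the inverted pairs between~$\{w_i^1,\dots,w_i^k\}$ and~$\{x_1,\dots,x_k\}$, which is the same idea the paper uses.
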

\fi

Next, we describe how we can modify this construction to obtain the hardness result for \SW.
The idea is that we add~$k$ additional vertices at the beginning of the topological ordering that need to be swapped with the~$k$ vertex of a clique~$S$ to obtain the required score.
We add~$k$ additional vertices~$y_1, \dots, y_k$ with~$\Fa_2(y_j)
:= \{(\emptyset,0,0),(\{x_k\},n^8,0)\}$ for each~$j\in [1,k]$ and~$\Fa_2(v) := \Fa(v)$ for each~$v\in N$.
Moreover, we set~$\sigma = (y_1, \dots, y_k) \cdot \tau$ and~$t_2 := t + k\cdot n^8$.
It remains to show that~$I'$ is a yes-instance of~\IV if and only if~$I_2 :=(N_2, \Fa_2, \sigma, t_2, k', r)$ is a yes-instance of \SW.
By construction and the above argumentation, an arc set~$A$ has score at least~$t_2-k = \score(\widehat{A}) + k\cdot n^8$ if and only if~$A^*\cup \{(x_k, y_j)\mid 1\leq j \leq k\}\subseteq A$. 
Consequently, each swap operation has to swap a different vertex from~$\{y_1, \dots, y_k\}$ with a vertex which appears after~$x_k$ in the current topological ordering, that is, with a vertex of~$V$. 
Let~$S$ be the vertices of~$V$ that are swapped with the vertices of~$\{y_1, \dots, y_k\}$. 
Then,~$\score(A) \geq t_2$ if and only if~$S$ forms a clique in~$G$.
\end{proof}

\section{Parameterized Local Search for Inversions Distance}
In this section we study parameterized local search for the \emph{inversions} distance. An inversion on an ordering is an operation that swaps the positions of two consecutive vertices of the ordering. 

We describe a randomized algorithm to solve~\MWBNSL. The algorithm has constant error probability and runs in subexponential FPT time for~$r$, the radius of the local search neighborhood. The algorithm is closely related to a parameterized local search algorithm for finding minimum weight feedback arc sets in tournaments by Fomin et al.~\cite{FLRS10}. A \emph{tournament} is a directed graph~$T=(N,A)$ where either $(u,v) \in A$ or~$(v,u) \in A$ for every pair of distinct vertices~$u$ and~$v$. A \emph{feedback arc set} is a subset~$A' \subseteq A$ such that~$(N,A \setminus A')$ is a DAG. The problem is defined as follows.

\begin{center}
	\begin{minipage}[c]{.9\linewidth}
          \textsc{Feedback Arc Set in Tournaments (FAST)}\\
          \textbf{Input}: 
A tournamend~$T:=(N,A)$, a weight function~$\omega: A \rightarrow \mathds{N}$, and an integer~$k$.\\
          \textbf{Question}: Is there a feedback arc set~$A' \subseteq A$ such that~$\sum_{a \in A'} \omega(a) \leq k$?
	\end{minipage}
\end{center}

In a local search version of FAST one is given a feedback arc set~$A'$ and aims to find a feedback arc set~$A''$ where the sum of the weights is strictly smaller,~$|A' \setminus A''| \leq r$, and~$|A'' \setminus A'| \leq r$ for a given search radius~$ r \in \mathds{N}_0$. The approach of Fomin et al. \cite{FLRS10} is ordering-based in the following sense: A feedback arc set~$A'$ is identified with the unique topological ordering~$\tau$ of~$(N,A_\text{rev})$, where~$A_{\text{rev}}$ results from~$A$ by reversing all directions of arcs in~$A'$. Interchanging the positions of two consecutive vertices on~$\tau$ then corresponds to removing exactly one arc from the current feedback arc set (or adding it to the feedback arc set, respectively).

FAST and \WBNSL are related, since in both problems we aim to find a DAG. Furthermore, removing one arc~$(u,v)$ in a tournament can be seen as changing the parent set of~$v$. However, note that the seemingly more general problem~\WBNSL is not an actual generalization of FAST: A parent set of a vertex contains up to~$n-1$ vertices and therefore, to model all posible deletions of its incoming arcs with local multiscores, there are up to~$2^{n-1}$ potential parent sets. Thus, modelling an instance of FAST as an instance of \WBNSL in this natural way takes exponential time and space. For our algorithm, we adapt the techniques used for~\textsc{FAST} and show that we can use these to obtain a local search algorithm for~\textsc{W-BNSL}.

We first introduce some formalism of inversions. Let~$\tau=(v_1, \dots, v_n)$ be an ordering of a set~$N$. An \emph{inversion on position~$i \in [1, n-1]$} transforms~$\tau$ into the ordering~$h_i(\tau) := (v_1, \dots, v_{i-1},v_{i+1},v_i, \dots, v_n)$. A \emph{sequence of inversions}~$S=(s_1, \dots, s_\ell)$ is a finite sequence of elements in~$[1, n-1]$. Applying~$S$ on~$\tau$ transforms~$\tau$ into the ordering~$S(\tau):=h_{s_\ell}\circ h_{s_{\ell-1}} \circ \dots \circ h_{s_1}(\tau)$. The distance~$\Inv(\tau,\tau')$ is the length of the shortest sequence of inversions~$S$ such that~$S(\tau)=\tau'$. Recall that the inversions-distance is also known as the \emph{Kendall tau distance}~\cite{K38}.

The key idea behind the algorithm is as follows: If we know a partition of the vertex set~$N$ into relatively few sets which do not change their relative positions in the topological ordering, then we have a limited space of possible orderings which can be obtained by performing at most~$r$ inversions. To specify which vertices keep their relative positions we employ a technique called color coding~\cite{AYZ95}. Intuitively, we randomly color all vertices with~$\Oh(\sqrt{r})$ colors in a way that vertices of the same color keep their relative positions. As the local search algorithm for FAST~\cite{FLRS10}, our color coding algorithm is closely related to a subexponential time algorithm for \textsc{FAST}~\cite{ALS09}. 

To describe the color coding technique, we need some definitions: Let~$N$ be a set of vertices. A function~$\chi: N \rightarrow [1,\ell]$ is called a~\emph{coloring (of~$N$ with~$\ell$ colors)}. For each color~$i \in [1,\ell]$, we call~$Z^i:=\{v \in N \mid \chi(v)=i\}$ the~\emph{color class of~$i$}. We next define color-restricted arc sets and color-restricted solutions which are important for the color coding algorithm.

\begin{definition}
Let~$I:=(N,\Fa,\tau,k,r)$ be an instance of~\MWBNSL , let~$\chi:N \rightarrow [1, \ell]$ be a coloring, and let~$A$ be an~$\Fa$-valid arc set. We say that~$A$ is a \emph{color-restricted arc set} if there is a topological ordering~$\tau'$ of~$(N,A)$ such that~$\Inv(\tau,\tau') \leq r$ and~$\tau[Z^i]=\tau'[Z^i]$ for every color~class~$Z^i$. 
A color-restricted arc set~$A$ that maximizes~$\score(A,k)$ is called a \emph{color-restricted solution} of~$I$ and~$\chi$.
\end{definition}

We next describe a deterministic algorithm that efficiently finds a color-restricted solution.

\begin{proposition} \label{Prop: Solving Basement Problem}
Given an instance~$I:=(N,\Fa,\tau,k,r)$ of~\MWBNSL  and a coloring~$\chi:N \rightarrow [1, \ell]$ for~$I$, a color-restricted solution~$A$ can be computed in~$(r+2)^\ell \cdot \poly(|I|)$~time.
\end{proposition}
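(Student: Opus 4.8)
The plan is to solve the problem with a single dynamic program that simultaneously builds the target ordering~$\tau'$ as an interleaving of the color classes and selects an optimal weight-bounded parent assignment. First I would record, for each color class~$Z^i$, its vertices in the order in which they appear on~$\tau$, say~$z^i_1 <_\tau \cdots <_\tau z^i_{n_i}$. Since a color-restricted arc set requires~$\tau[Z^i] = \tau'[Z^i]$ for every~$i$, every admissible~$\tau'$ is exactly a \emph{merge} of these~$\ell$ sequences, and conversely every merge yields an ordering respecting all color classes. Thus it suffices to optimize over all merges whose distance~$\Inv(\tau,\tau')$ is at most~$r$.

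Second, I would set up a table~$T[\vec a, q, k']$ indexed by a pointer tuple~$\vec a = (a_1, \dots, a_\ell)$ with~$a_i \in [0, n_i]$, a running inversion count~$q \in [0, r]$, and a used weight~$k' \in [0, k]$. The tuple~$\vec a$ encodes the already-placed prefix~$X(\vec a) := \bigcup_i \{z^i_1, \dots, z^i_{a_i}\}$, and~$T[\vec a, q, k']$ stores the maximum score of a parent assignment to the vertices of~$X(\vec a)$ in which every vertex's chosen parent set lies entirely in the set placed before it, the total weight equals~$k'$, and the chosen merge order produces exactly~$q$ inversions with~$\tau$. The transition places the next vertex~$v := z^i_{a_i+1}$ of some color~$i$: the number of new inversions is the number of already-placed vertices that appear after~$v$ on~$\tau$ (same-color vertices never contribute, as their relative order is preserved), and we may assign~$v$ any~$(P, s, \w) \in \Fa(v)$ with~$P \subseteq X(\vec a)$, updating the weight to~$k' + \w$ and increasing the score by~$s$. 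Crucially, the feasibility test~$P \subseteq X(\vec a)$ depends only on the current state, because at the moment~$v$ is appended the vertices to its left are exactly~$X(\vec a)$; this is what lets the order-dependent parent feasibility be captured by a state over~$\vec a$ alone. Summing the incremental inversions along any full merge yields precisely~$\Inv(\tau, \tau')$, so reading off the maximum of~$T[(n_1, \dots, n_\ell), q, k']$ over~$q \le r$ and~$k' \le k$ and tracing back produces an~$\Fa$-valid, color-restricted arc set of maximum score, that is, a color-restricted solution.

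The running time hinges on bounding the number of relevant pointer tuples, which I expect to be the main obstacle. Here I would keep only tuples reachable with~$q \le r$ and prove that for such a tuple, writing~$p := \sum_i a_i$ and letting~$Y$ be the set of the first~$p$ vertices of~$\tau$, every pair in~$(X(\vec a) \setminus Y) \times (Y \setminus X(\vec a))$ is an inversion of the completed ordering; hence~$|X(\vec a) \setminus Y|^2 \le r$ and~$|X(\vec a) \setminus Y| \le \sqrt r$. Since the total per-color deviation~$\sum_i \bigl| a_i - |Y \cap Z^i| \bigr|$ is at most~$2\,|X(\vec a)\setminus Y| \le 2\sqrt r$, each~$a_i$ stays close to its ``identity'' value~$|Y \cap Z^i|$, and a routine count shows that for each fixed~$p$ there are at most~$(r+2)^\ell$ such tuples. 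Summing over the~$n+1$ values of~$p$ and the polynomially many choices of~$q \in [0,r]$ and~$k' \in [0,k]$ keeps the number of states at~$(r+2)^\ell \cdot \poly(|I|)$; here we may assume~$r \le \binom{n}{2}$, as otherwise every ordering is within distance~$r$ and~\Cref{thm poly if dag} applies directly, so that~$r$, $q$, and~$k$ are all polynomially bounded. Each transition scans~$\Fa(v)$ and checks subset-containment in~$X(\vec a)$ in polynomial time, so the total running time is~$(r+2)^\ell \cdot \poly(|I|)$. The correctness argument — that the table stores exactly what is claimed and that traceback returns a color-restricted arc set — is then a straightforward induction over the transitions.
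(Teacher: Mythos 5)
Your dynamic program is essentially the paper's algorithm run in the mirror direction: the paper builds the ordering from the back by repeatedly choosing a sink~$z_i(p_i)$ and charging~$R(\vec p,i)$ inversions for moving it past everything placed after it, whereas you build from the front by appending the next vertex of some color class and charging the number of already-placed vertices that lie after it on~$\tau$. Your observation that these incremental charges sum to exactly~$\Inv(\tau,\tau')$ (the number of discordant pairs), and that parent-set feasibility~$P\subseteq X(\vec a)$ depends only on the state, are both correct; the correctness induction would go through just as the paper's does.

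The gap is in the running-time analysis. You prune to ``tuples reachable with~$q\le r$'' and then argue that for such a tuple every pair in~$(X(\vec a)\setminus Y)\times(Y\setminus X(\vec a))$ is an inversion of ``the completed ordering'', hence~$|X(\vec a)\setminus Y|\le\sqrt r$. But~$q$ only counts discordant pairs with \emph{both} endpoints already placed; the cross pairs you invoke are charged only later, when the missing vertices of~$Y$ get appended. So a state can be reachable with~$q\le r$ (even~$q=0$) while having arbitrarily large deviation~$|X(\vec a)\setminus Y|$ --- it just cannot be completed within budget, which forward reachability does not detect. Concretely, color the vertices round-robin, so that~$Z^i=\{\tau(i),\tau(\ell+i),\tau(2\ell+i),\dots\}$: then \emph{every} pointer tuple~$\vec a$ is reachable with~$q=0$ by placing~$X(\vec a)$ in~$\tau$-order, and your table has~$\prod_i(|Z^i|+1)=(n/\ell+1)^\ell=n^{\Theta(\ell)}$ live states, i.e.\ the trivial XP bound rather than~$(r+2)^\ell\cdot\poly(|I|)$. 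The fix is small but necessary: additionally discard every state with~$|X(\vec a)\setminus Y|>\sqrt r$. This is safe because every prefix of a target ordering~$\tau'$ with~$\Inv(\tau,\tau')\le r$ satisfies this bound by exactly your cross-pair argument, and with it your deviation count~$\sum_i\bigl|a_i-|Y\cap Z^i|\bigr|\le 2\sqrt r$ becomes valid and yields the claimed number of states (up to the harmless slack between roughly~$(4\sqrt r+1)^\ell$ and~$(r+2)^\ell$, which does not affect the final~$2^{\Oh(\sqrt r\log r)}$ bound). This extra pruning is the forward-direction analogue of the paper's ``holes'' argument, which obtains the same restriction for free because its top-down memoization only ever generates vectors lying on a path from the full vector whose accumulated charge is at most~$r$.
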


\begin{proof}
Before we present the algorithm, we provide some intuition.

\textit{Intuition.} Our algorithm is based on dynamic programming. Every color class can be seen as a chain of vertices that keep their relative position in the ordering. An integer vector of length~$\ell$ then describes which prefixes of these chains we consider. Our dynamic programming algorithm starts with empty chains and then adds the next vertex of one of the chains and finds a solution of the instance with the extended prefix vectors.

\textit{Notation.}  To formally describe the algorithm we introduce some notation. 
For every color class~$Z^i$ consider the sub-ordering~$\tau[Z^i]$ that contains only the vertices of~$Z^i$. Given some integer~$x \leq |Z^i|$ we define~$Z^i_x:= \{z_i(1), z_i(2), \dots, z_i(x)\}$ as the set of the first~$x$ vertices appearing on~$\tau[Z^i]$, where~$z_i(j)$ denotes the~$j$th vertex on~$\tau[Z^i]$.

Note that~$Z^i_0= \emptyset$. Given an integer vector~$\vec{p}=(p_1, \dots, p_\ell)$ with~$p_i \in [0, |Z^i|]$, we define~$\tau(\vec{p}):= \tau[Z^1_{p_1} \cup Z^2_{p_2} \cup \dots \cup Z^\ell_{p_\ell}]$. 
As a shorthand, for the set of vertices appearing on~$\tau(\vec{p})$ we define~$N(\vec{p}):=N(\tau(\vec{p}))$ and we define~$\Fa_{\vec{p}}$ by~$\Fa_{\vec{p}}(v):= \{(P,s,\w) \in \Fa(v) \mid P \subseteq N(\vec{p}) \}$ for every~$v \in N(\vec{p})$ as the \emph{limitation of~$\Fa$ to~$N(\vec{p})$}.  
Note that, given a partial ordering~$\tau(\vec{p})$, the vertex~$z_i(p_i)$ is the last vertex of color class~$Z^i$ appearing on~$\tau(\vec{p})$. 
Throughout this proof, we let~$\vec{0}$ denote the integer vector of length~$\ell$ where all entries equal zero, and~$\vec{e}_i$ the integer vector of length~$\ell$ where the~$i$th entry equals one and all other entries equal~zero.

\textit{Algorithm.} The dynamic programming table~$T$ has entries of the type~$T[\vec{p},k',r']$ with~$k' \in [0, k]$ and~$r' \in [0, r]$. Each entry stores the score of a color-restricted solution of the~\MWBNSL instance
\begin{align*}
I^{\vec{p}}_{k',r'} := ( N(\vec{p}), \Fa_{\vec{p}}, \tau(\vec{p}), k',r').
\end{align*}
The instance~$I^{\vec{p}}_{k',r'}$ is an instance obtained by only considering the prefixes of the chains of color classes which are specified by the vector~$\vec{p}$. The idea behind this algorithm is to recursively find the best sink of the current network and combine this with a color-restricted solution of the remaining network. To specify the contribution of a sink to the score, we introduce the following definition: For given~$i \in [0, \ell]$, $k' \in [0, k]$, and~$\vec{p}$, we define the value~$f_{\vec{p}}(i,k')$ as the maximal local score of a parent set~$P \subseteq N(\vec{p})$ of~$z_i(p_i)$ that simultaneously has weight at most~$k'$. More formally,
\begin{align*}
f_{\vec{p}}(i,k') := \max_{\substack{(P,s,\w) \in \Fa(z_i(p_i))\\ P \subseteq N(\vec{p}) \\ \w \leq k'}} s.
\end{align*}
The value of~$f_{\vec{p}}(i,k')$ can be computed in~$\poly(|I|)$ time by iterating over the array representing~$\Fa(v)$.

We next describe how to fill the dynamic programming table. As base case we set~$T[\vec{0},k',r'] := 0$ for all~$k' \in [0, k]$ and~$r' \in [0, r]$. 
The recurrence to compute entries for~$\vec{p} \neq \vec{0}$ is
\begin{align*}
T[\vec{p},k',r'] := & \max_{k'' \leq k'} ~ \max_{\substack{i,~ p_i > 0\\R(\vec{p},i) \leq r'}} 
\Bigl( f_{\vec{p}}(i, k'') + T[\vec{p}-\vec{e}_i, k'-k'', r'-R(\vec{p},i)]\Bigr),
\end{align*}
where~$R(\vec{p},i):= | \{ v \in N(\vec{p}) \mid z_i({p_i}) <_{\tau(\vec{p})} v\}|$ is the number of elements that appear after~$z_i({p_i})$ in~$\tau(\vec{p})$. Intuitively,~$R(\vec{p},i)$ is the number of inversions that need to be performed to move~$z_i(p_i)$ to the end of the ordering~$\tau(\vec{p})$. The score of a color-restricted solution of~$I$ and~$\chi$ can be computed by evaluating~$T[(|Z^1|,|Z^2|, \dots, |Z^{\ell}|),k,r]$. The corresponding network can be found via traceback.

\textit{Correctness.} We next show that the dynamic programming recurrence is correct. That is, we prove the following claim.

\begin{claim} \label{Claim: Correctness inner DP}
The table entry~$T[\vec{p},k',r']$ contains the score of a color-restricted solution of~$I^{\vec{p}}_{k',r'}$ and~$\chi|_{N(\vec{p})}$ for each combination~$\vec{p}$, $k'$, and~$r'$.
\end{claim}

\iflong
\begin{claimproof}
We prove the claim by induction over~$|\vec{p}|:= \sum_{i=1}^{\ell} p_i$.

\textit{Base Case: $|\vec{p}|=0$.} Then~$\vec{p}= \vec{0}$ and~$I^{\vec{0}}_{k',r'}$ is an instance with an empty vertex set. Thus, the score of a color-restricted solution is~$0 = T[\vec{0},k',r']$.

\textit{Inductive Step:} Let the claim hold for all~$\vec{q}$ with~$|\vec{q}| < |\vec{p}|$. Let~$A$ be a color-restricted solution of~$I^{\vec{p}}_{k',r'}$ and~$\chi|_{N(\vec{p})}$. We prove
\begin{align*}
\score(A, k') = T[\vec{p},k',r'].
\end{align*}

$(\geq)$ We first show~$\score(A,k') \geq T[\vec{p},r',k']$. Let~$i$ and $k''$ be the integers that maximize the right hand side of the recurrence. Taking the vertex~$z_i({p_i})$ and moving it to the rightmost position of~$\tau(\vec{p})$ can be done with exactly~$R(\vec{p},i) \leq r'$ inversions. Let~$P$ be the parent set of the tuple~$(P,s,\omega)$, that maximizes~$f_{\vec{p}}(i,k'')$. Defining~$P$ as the parent set with weight at most~$k''$ of~$z_i(p_i)$ and combining this with a color-restricted solution~$A''$ of~$I^{\vec{p}-\vec{e}_i}_{k'-k'',r'-R(\vec{p},i)}$ defines a color-restricted arc set~$A'$ for~$I^{\vec{p}}_{k',r'}$ with~$\score(A',k') = \score(A'',k'-k'') + f_{\vec{p}}(i, k'')$. The inductive hypothesis implies~$\score(A'',k'-k'')=T[\vec{p}-\vec{e}_i,k'-k'',r'-R(\vec{p},i)]$. We thus have~$\score(A',k') = T[\vec{p},k',r']$, and therefore~$\score(A,k') \geq T[\vec{p},k',r']$.

$(\leq)$ We next show~$\score(A,k') \leq T[\vec{p},r',k']$. Let~$Y^1, \dots, Y^{\ell}$ be the color classes of~$\chi|_{N(\vec{p})}$. Since~$A$ is a color-restricted solution of~$I^{\vec{p}}_{k',r'}$ and~$\chi|_{N(\vec{p})}$, there exists a topological
 ordering~$\tau'$ of~$(N(\vec{p}),A)$ with~$\Inv(\tau(\vec{p}), \tau') \leq r'$ and~$\tau(\vec{p})[Z^i]=\tau'[Y^i]$ for every color~$i$. It follows that the last vertex appearing on~$\tau'$ is~$z_i({p_i})$ for some color~$i$. By the inductive hypothesis the entry~$T[\tau(\vec{p}-\vec{e}_i), k'-k'', r'-R(\vec{p},i)]$ stores the score of a color-restricted solution of~$I^{\vec{p}-\vec{e}_i}_{k'-k'',r'-R(\vec{p},i)}$ for each~$i$ and~$k'$. Since we iterate over every choice of~$i$ and~$k''$, we consider every possible choice of~$z_i({p_i})$ and combine its best possible parent set with a color-restricted solution of the remaining instance. Consequently,~$\score(A,k') \leq T[\vec{p},k',r']$.  $\hfill \Diamond$
\end{claimproof}
\fi

\textit{Running Time.} We now analyze the running time of the algorithm. The table~$T$ has~$\Oh(n^\ell \cdot (k+1) \cdot (r+1))$ entries. Each entry can be computed in time polynomial in~$|I|$. This would lead to a running time of~$n^{\ell} \cdot \poly(|I|)$. However, we can obtain a running time of~$(r+1)^\ell \cdot \poly(|I|)$ by using a more elaborate analysis similar to the one used by Fomin et al.~\cite{FLRS10}. The key idea is to compute the entry~$T[(|Z^1|,|Z^2|, \dots, |Z^{\ell}|),k,r]$ in a top-down manner using a memoization table to store the results of computed table entries. To obtain the running time bound, we give a bound on the number of vectors~$\vec{p}$ such that an entry~$T[\vec{p},k',r']$ is evaluated by the algorithm in order to compute the entry~$T[(|Z^1|,|Z^2|, \dots, |Z^{\ell}|),k,r]$. Throughout the rest of this proof we call such~$\vec{p}$ an \emph{important vector}.

Let~$\vec{p}$ be an important vector. An index~$j \in [1,n]$ is called a~\emph{hole in~$\tau(\vec{p})$} if~$\tau(j) \not \in N(\vec{p})$ and there exists some~$j'>j$ with~$\tau(j') \in N(\vec{p})$. Given a hole~$j$, we let~$\Phi(j)$ denote the set of vertices in~$N(\vec{p})$ that appear after~$\tau(j)$ on~$\tau(\vec{p})$. 

The intuition behind holes is the following: Throughout the top-down algorithm we have a `budget' of at most~$r$ inversions. A hole at some position~$j$ is caused by a recursive call of the algorithm with some vector~$\vec{p}-\vec{e}_i$ where~$i$ corresponds to the color of the vertex~$\tau(j)$. Intuitively, this means that the vertex~$\tau(j)$ is chosen to be a sink in a possible solution of an instance~$I^{\vec{p}}_{k',r'}$ and therefore it is moved to the last position of the ordering~$\tau(\vec{p})$. For this movement, the vertex~$\tau(j)$ needs to pass all the vertices appearing on~$\tau(\vec{p})$ after~$\tau(j)$. Consequently, this decreases our budget of inversions by at least~$\Phi(j)$. We then use the fact that the total budget of inversions is at most~$r$ which implies that the number and the position of holes in~$\tau(\vec{p})$ is restricted when~$\vec{p}$ is an important vector. Afterwards, we use this restriction to give an upper bound on the total number of important vectors. To formally show that the number of important vectors is bounded, we need the inequality stated in the following claim.

\begin{claim} \label{Claim: Hole Inequality}
If~$T[\vec{p},k',r']$ is an entry that is evaluated by the algorithm, we have
\begin{align*}
\sum_{j {\rm ~is~a~hole~in~}\tau(\vec{p})} |\Phi(j)| \leq r.
\end{align*}
\end{claim}

\begin{claimproof}
Consider a sequence of table entries~$(T[{\vec{q} \, }^1,k_1,r_1] , \dots, T[{\vec{q} \, }^s,k_s,r_s])$ such that
\begin{align*}
T[{\vec{q} \, }^1,k_1,r_1] &= T[(|Z^1|,|Z^2|, \dots, |Z^{\ell}|),k,r],\\
T[{\vec{q} \, }^s,k_s,r_s] &= T[\vec{p}, k',r'],
\end{align*}
and for each~$x \in [1, s-1]$ the entry~$T[{\vec{q} \, }^{x+1},k_{x+1},r_{x+1}]$ has been evaluated by the algorithm in order to compute the entry~$T[{\vec{q} \, }^x,k_x,r_x]$. Such sequence exists, since~$T[\vec{p},k',r']$ is evaluated by the algorithm. Note that~$\sum_{x=1}^{s-1} (r_x - r_{x+1}) \leq r$.

Let~$j$ be a hole of~$\tau(\vec{p})$. Then, there is a unique index~$x \in [1, s-1]$ such that the computation of~$T[{\vec{q} \, }^x,k_x,r_x]$ creates the hole~$j$ by recursively calling~$T[{\vec{q} \, }^{x+1},k_{x+1},r_{x+1}]$ with~${\vec{q} \, }^{x+1}={\vec{q} \,}^x-e_i$ and~$z_i({q^x}_i)=\tau(j)$ for some~$i$. By the recurrence we then have~$r_x = R({\vec{q} \, }^x,i)+ r_{x+1}$. Therefore,~$r_x-r_{x+1} = R({\vec{q} \, }^x,i) \geq |\Phi(j)|$ since~$\tau({\vec{p}})$ is an induced subordering of~$\tau({\vec{q} \, }^x)$. Thus, we have
\begin{align*}
\sum_{j \text{ is a hole in }\tau(\vec{p})} |\Phi(j)| \leq \sum_{x=1}^{s-1} (r_x - r_{x+1}) \leq r.
\end{align*} $\hfill \Diamond$
\end{claimproof}

We next use Claim~\ref{Claim: Hole Inequality} to give a bound on the number of important vectors. Let~$\vec{p}$ be important and let~$x \in [1,n]$ such that~$\tau(x)$ is the last vertex appearing on~$\tau(\vec{p})$. We show that~$\tau(j) \in N(\vec{p})$ for every~$j \in [1, x-(r+1)-1]$. Intuitively, this means that all holes in~$\tau(\vec{p})$ are close to~$x$. Assume towards a contradiction that there is some~$j \in [1, x-(r+1)-1]$ with~$\tau(j) \not \in N(\vec{p})$. Note that~$j$ is a hole in~$\tau(\vec{p})$ with~$\tau(x) \in \Phi(j)$. Consider an index~$j' \in [x-(r+1),x-1]$. If~$\tau(j') \not \in N(\vec{p})$, then~$j'$ is a hole in~$\tau(j')$ with~$\tau(x) \in \Phi(j')$. Otherwise, if~$\tau(j') \in N(\vec{p})$, we have~$\tau(j') \in \Phi(j)$. Since there are~$r+1$ possible values for~$j'$, this is a contradiction to the inequality from Claim~\ref{Claim: Hole Inequality}.

We can specify an important vector~$\vec{p}$ by specifying the set~$N(\vec{p})$. By the above, it suffices to specify which elements of~$X:=\{\tau(x-(r+1)), \dots, \tau(x-1)\}$ belong to~$N(\vec{p})$. Recall that there are~$\ell$ color classes, and the vertices of each color class~$Z^i$ which belong to~$N(\vec{p})$ are defined by the entry~$p_i$. Since~$X$ contains~$r+1$ elements, for each color class, there are~$r+2$ possible choices of~$Z^i \cap X$. Thus, there are~$(r+2)^\ell$ possible important vectors. Consequently, the algorithm computes at most~$(r+2)^\ell \cdot (k+1) \cdot (r+1)$ table entries. Since each entry can be computed in polynomial time in~$|I|$, the algorithm runs in~$(r+2)^\ell \cdot \poly(|I|)$~time.
\end{proof}

We now describe how to use the algorithm behind Proposition~\ref{Prop: Solving Basement Problem} to obtain a randomized algorithm for~\MWBNSL. We randomly color the vertices with~$\sqrt{8r}$ colors and use the algorithm from Proposition~\ref{Prop: Solving Basement Problem} to find a color-restricted solution. For ease of notation, we assume that~$\sqrt{8r}$ is an integer. Given an instance~$I$ and a coloring~$\chi$, we say that~$\chi$ is a \emph{good coloring of~$I$} if every color-restricted solution of~$\chi$ and~$I$ is a solution of~$I$. Note that not every coloring is a good coloring. A trivial example is a random coloring, where every vertex receives the same color. In this case, a colored solution has the topological ordering~$\tau$ since all vertices keep their relative positions. However, if~$r \geq 1$, the uncolored instance might have strictly better solution with a topological ordering that is~$r$-close to~$\tau$.

 We next analyze the likelihood of randomly choosing a good~coloring.

\begin{lemma}\label{Lemma: successful coloring}
Let~$I$ be an instance of \MWBNSL  and let~$\chi:N \rightarrow [1, \sqrt{8r}]$ be a coloring that results from assigning a color to each vertex uniformly at random. Then,~$\chi$ is a good coloring of~$I$ with probability at least~$(2e)^{-\sqrt{\nicefrac{r}{8}}}$.
\end{lemma}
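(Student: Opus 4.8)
The plan is to fix an optimal solution $A$ of $I$ together with a topological ordering $\tau^*$ of $(N,A)$ satisfying $\Inv(\tau,\tau^*)\le r$, and to show that a random coloring $\chi$ is good with the claimed probability. The key observation is that $\chi$ being good is guaranteed by a purely combinatorial \emph{sufficient} condition: if every pair of vertices that appears in a different relative order in $\tau$ and $\tau^*$ receives \emph{distinct} colors, then $A$ itself is a color-restricted arc set (because $\tau^*$ witnesses $\tau[Z^i]=\tau^*[Z^i]$ for every class $Z^i$, as no two same-colored vertices got inverted). Since $A$ is optimal and a color-restricted solution can only do at least as well, every color-restricted solution then has score $\ge \score(A,k)$, hence equals the optimum, so $\chi$ is good. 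Thus I only need to lower-bound the probability that all \emph{inverted pairs} are rainbow-colored.

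\textbf{Counting the inverted pairs.}
The pairs whose relative order differs between $\tau$ and $\tau^*$ are exactly the $\Inv(\tau,\tau^*)$ many inversions, so there are at most $r$ such pairs. Let me call the set of vertices incident to at least one inverted pair the \emph{affected} set. Each single inversion-pair involves two vertices, but more usefully I want to bound how many colors the affected structure needs. First I would argue that it suffices to properly color, with distinct colors, the endpoints of the inverted pairs \emph{viewed as a graph} $H$ on the affected vertices whose edges are the inverted pairs: a proper coloring of $H$ gives distinct colors to the two endpoints of every inverted pair. Since $H$ has at most $r$ edges, the number of affected vertices incident to inverted pairs is bounded, and I can bound the number of vertices I must keep rainbow. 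This is the step that ties the combinatorics of Kendall-tau distance to the color budget $\sqrt{8r}$.

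\textbf{The probability estimate.}
With $\sqrt{8r}$ colors assigned uniformly and independently, the probability that a fixed inverted pair is bichromatic is $1-\tfrac{1}{\sqrt{8r}}$, but the pairs are not independent, so I would instead control the event directly on the affected vertex set. The cleanest route, mirroring color-coding analyses for FAST~\cite{FLRS10,ALS09}, is: since at most $r$ inverted pairs exist and they span a set of affected vertices, the probability that all of them are rainbow is at least the probability that the affected vertices receive pairwise-distinct colors where forced, which I lower-bound by a product of the form $\prod (1-j/\sqrt{8r})$ over the forced constraints. Summing the failure contributions and using $1-x\ge e^{-2x}$ for small $x$, the bound collapses to $\exp(-c\sqrt{r})$; matching constants with the $\sqrt{8r}$ color count is what produces the stated $(2e)^{-\sqrt{r/8}}$.

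\textbf{Main obstacle.}
The delicate point I anticipate is \emph{not} the good-coloring implication, which is essentially definitional, but the combinatorial bookkeeping that converts ``at most $r$ inverted pairs'' into a probability bound with exactly the constant $\sqrt{8}$ and base $2e$. The inverted pairs can overlap and chain together (a single vertex may participate in many inversions), so a naive union bound over $r$ independent pairs gives a weaker constant. The right move is to bound the number of vertices one must distinguish in terms of $\sqrt{r}$ (since $r$ inversions among $m$ affected vertices force $m=O(\sqrt{r})$ in the worst relevant case, or one partitions into monochromatic-safe blocks), and then a single clean Chernoff/product estimate yields the exponent. Getting this counting tight enough to recover $(2e)^{-\sqrt{r/8}}$, rather than merely some $\exp(-\Omega(\sqrt r))$, is the technical heart of the proof.
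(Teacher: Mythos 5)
Your proposal follows essentially the same route as the paper: fix an optimal solution $A$ with an $r$-close topological ordering, form the graph whose edges are the at most $r$ pairs of vertices whose relative order differs between $\tau$ and that ordering, observe that a proper coloring of this graph makes $A$ color-restricted and hence makes $\chi$ good, and then bound the probability that a uniform random coloring with $\sqrt{8r}$ colors is proper. The paper dispenses with your ``technical heart'' entirely by citing the known bound from the subexponential FAST algorithm of Alon, Lokshtanov and Saurabh: a random coloring with $\sqrt{8m}$ colors of a graph with $m$ edges is proper with probability at least $(2e)^{-\sqrt{m/8}}$. One aside in your sketch is actually false and should be dropped: $r$ inverted pairs do \emph{not} force the affected vertex set to have size $O(\sqrt{r})$ (a matching of $r$ inverted pairs touches $2r$ vertices); the correct quantity driving the bound is the number of \emph{edges}, not vertices, and the cited lemma needs only that.
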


\iflong
\begin{proof}
Let~$I=(N,\Fa,\tau,k,r)$, let~$A$ be a solution of~$I$, and let~$\tau'$ be an~$r$-close topological ordering of~$(N,A)$. Consider the graph~$G:=(N,E)$, where~$E:=\{ (u,v) \mid u <_{\tau} v \text{ and } v <_{\tau'} u \}$. Intuitively, an arc~$(u,v) \in E$ indicates that in~$\tau'$ the vertices~$u$ and~$v$ have another relative position than in~$\tau$. Note that, when applying a sequence of inversions~$S$ on an orderingtransforming an ordering, at most~$|S|$ pairs of vertices change their relative positions. Thus, we have~$|E| \leq r$ since~$\tau'$ is~$r$-close to~$\tau$.

A proper vertex coloring of~$G$ is a coloring that assigns distinct colors to every pair of vertices that are connected by an edge. It is easy to see that~$\chi$ is a good coloring of~$I$ if~$\chi$ is a proper vertex coloring of~$G$. The probability of randomly choosing a proper vertex coloring with~$\sqrt{8r}$ colors for a graph with~$r$ edges is at least~$(2e)^{-\sqrt{\nicefrac{r}{8}}}$~\cite{ALS09}.
\end{proof}
\fi

\iflong We next describe the randomized algorithm. \else Applying the algorithm behind Proposition~\ref{Prop: Solving Basement Problem} with random colorings~$(2e)^{\sqrt{\nicefrac{r}{8}}}$ times gives the following result.  \fi
Recall that for every vertex~$v$ we have~$(\emptyset,s,0) \in \Fa(v)$ for some~$s \in \mathds{N}_0$. Thus, given an instance~$(N,\Fa,\tau,k,r)$ of~\MWBNSL, there always exists an~$\Fa$-valid arc set~$A:=\emptyset$ with weight at most~$k$ such that~$(N,A)$ has a topological ordering that is~$r$-close to~$\tau$.

\begin{theorem} \label{Theorem: Inversions Algorithm}
There exists a randomized algorithm for~\MWBNSL  that, in time~$2^{\Oh(\sqrt{r} \cdot \log(r))} \cdot \poly(|I|)$ returns an~$\Fa$-valid arc set~$A$ with weight at most~$k$ such that~$(N,A)$ has a topological ordering that is~$r$-close to~$\tau$. With probability at least~$1- \frac{1}{e}$, the arc set~$A$ is a solution.
\end{theorem}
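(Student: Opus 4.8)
The plan is to combine Proposition~\ref{Prop: Solving Basement Problem} with the probability bound from Lemma~\ref{Lemma: successful coloring} in the standard color-coding fashion, boosting the single-run success probability to a constant by repetition. First I would set~$\ell := \sqrt{8r}$ and note that a single random coloring~$\chi$ yields, via Proposition~\ref{Prop: Solving Basement Problem}, a color-restricted solution in time~$(r+2)^{\sqrt{8r}} \cdot \poly(|I|) = 2^{\Oh(\sqrt{r}\log r)} \cdot \poly(|I|)$. By Lemma~\ref{Lemma: successful coloring}, with probability at least~$p := (2e)^{-\sqrt{r/8}}$ the chosen coloring is good, meaning every color-restricted solution of~$\chi$ and~$I$ is in fact a (global) solution of~$I$. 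So a single run returns a genuine solution with probability at least~$p$.

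The core of the argument is the repetition step: I would run the algorithm independently~$N := \lceil (2e)^{\sqrt{r/8}} \rceil$ times on fresh random colorings, keeping the arc set of maximum score over all runs. The probability that \emph{none} of the~$N$ runs uses a good coloring is at most~$(1-p)^{N} \le (1-p)^{1/p} \le e^{-1}$, using the elementary inequality~$(1-p)^{1/p} \le 1/e$ for~$p \in (0,1]$. Hence with probability at least~$1 - 1/e$ at least one run draws a good coloring and therefore returns a true solution; since we keep the best arc set found, the output is a solution in that event. The total running time is~$N \cdot (r+2)^{\sqrt{8r}} \cdot \poly(|I|) = (2e)^{\sqrt{r/8}} \cdot 2^{\Oh(\sqrt{r}\log r)} \cdot \poly(|I|)$, and since~$(2e)^{\sqrt{r/8}} = 2^{\Oh(\sqrt{r})}$ this collapses into~$2^{\Oh(\sqrt{r}\log r)} \cdot \poly(|I|)$ as claimed.

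It remains to argue that the returned arc set always satisfies the promised structural guarantees, regardless of whether any coloring was good. Here I would invoke the observation recorded just before the theorem statement: since every vertex admits the multiscore~$(\emptyset, s, 0)$ with weight zero, the empty arc set~$A = \emptyset$ is always~$\Fa$-valid, has weight~$0 \le k$, and~$(N,A)$ has topological ordering~$\tau$ itself, which is trivially~$r$-close to~$\tau$. Every color-restricted arc set produced by Proposition~\ref{Prop: Solving Basement Problem} is by definition~$\Fa$-valid with weight at most~$k$ and admits a topological ordering at inversions-distance at most~$r$ from~$\tau$; taking the maximum-score arc set over all runs (and falling back to~$\emptyset$ if needed) therefore always yields an arc set meeting the feasibility conditions, even when no run is lucky. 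The main obstacle, and the only place any care is needed, is keeping the two sources of error straight: the feasibility of the output is deterministic, whereas only its \emph{optimality} (being a true solution rather than merely a color-restricted one for some coloring) is the event that holds with probability~$1 - 1/e$. I would state this separation explicitly so that the claimed guarantee matches the theorem.
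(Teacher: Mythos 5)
Your proposal is correct and follows essentially the same route as the paper: draw $(2e)^{\sqrt{r/8}}$ independent random colorings with $\sqrt{8r}$ colors, solve each via Proposition~\ref{Prop: Solving Basement Problem}, keep the best arc set, and bound the failure probability by $(1-p)^{1/p}\le 1/e$. Your explicit separation of the deterministic feasibility guarantee from the probabilistic optimality guarantee is a nice touch that the paper handles only in the remark preceding the theorem statement.
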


\iflong
\begin{proof}
Let~$I$ be an instance of the~\MWBNSL. The algorithm can be described as follows: Repeat the following two steps~$(2e)^{\sqrt{\nicefrac{8}{r}}}$~times and return the color-restricted solution with the maximal~score.
\begin{enumerate}
\item[1.] Color every vertex of~$N$ independently with one color from the set~$[1, \sqrt{8r}]$ with uniform probability. Let~$\chi$ be the resulting coloring.
\item[2.] Apply the algorithm from Proposition~\ref{Prop: Solving Basement Problem} and compute a color-restricted solution of~$I$ and~$\chi$.
\end{enumerate}

By Lemma~\ref{Lemma: successful coloring}, the probability of choosing a good coloring in Step 1 and therefore computing a solution of~$I$ in Step 2 is at least~$(2e)^{-\sqrt{\nicefrac{8}{r}}}$. Thus, by repeating these steps~$(2e)^{\sqrt{\nicefrac{8}{r}}}$ times, we obtain a running time of~$(2e)^{\sqrt{\nicefrac{r}{8}}} \cdot (r+2)^{\sqrt{8r}} \cdot \poly(|I|)$. As a shorthand, let~$x:= \sqrt{\nicefrac{8}{r}}$. The probability that the output is not a solution is at most
\begin{align*}
(1-(2e)^{-x})^{(2e)^{x}} \leq (e^{-(2e)^{-x}})^{(2e)^{x}} = \frac{1}{e}.
\end{align*}
The first inequality relies on the inequality~$(1+y) \leq e^y$ for all~$y$. Then, the probability that the output is a solution is at least~$1-\frac{1}{e}$.
\end{proof}
\fi

\section{Parameterized Local Search for Inversion-Window Distance} \label{Subsection: Combine Solutions}

We now study the \emph{inversion-window distance}~$\text{InvWin}(\tau,\tau')$ which is closely related to the number of inversions. The intuition behind the new distance is the following: Given an ordering~$\tau$, we obtain an~$r$-close ordering by partitioning~$\tau$ into multiple windows of consecutive vertices and performing up to~$r$~inversions inside each such window. This new distance is an extension of the inversions distance in the sense that~$\text{InvWin}(\tau,\tau') \leq \text{Inv}(\tau,\tau')$. In other words, given a search radius~$r$, the search space of~$r$-close orderings is potentially larger when using the~$\text{InvWin}(\tau,\tau')$ instead of~$\text{Inv}(\tau,\tau')$. For this extended distance we also provide a randomized algorithm with running time~$2^{\Oh(\sqrt{r} \log(r))} \cdot \poly(|I|)$.

We now formally define the inversion-window distance. Let~$\tau$ and~$\tau'$ be orderings of a vertex set~$N$ and let~$S$ be a sequence of inversions that transforms~$\tau$ into~$\tau'$. A \emph{window partition}~$W$ for~$S$ is a partition of the set~$[1,n]$ into intevals~$I_1 = [a_1,b_1], \dots,I_\ell = [a_\ell,b_\ell]$ such that no endpoint~$b_i$ appears on~$S$. Note that for each window~$I_j=[a_j,b_j]$ we have~$N_{\tau}(a_j,b_j)=N_{\tau'}(a_j,b_j)$. In other words, given a window partition~$W$, applying~$S$ on~$\tau$ does not move any vertex from one interval of~$W$ to another interval of~$W$. A window partition always exists since~$n$ does not appear on any sequence of inversions and thus~$W:=\{ [1,n]\}$ is a window partition. The width of a window partition~$W$ for~$S$ is defined as

\begin{align*}
\text{width}(W):= \max_{I \in W} \sum_{j \in I} \#(S,j),
\end{align*}

where~$\#(S,j)$ denotes the number of appearances of index~$j$ on the sequence~$S$. The \emph{number of window inversions}~$\WI(S)$ is then defined as the minimum~$\text{width}(W)$ among all window partitions~$W$ of~$S$. In other words, $\WI(S)$ is the smalles possible maximum number of inversions inside a window among all possible window partitions. The \emph{inversion-window distance}~$\text{InvWin}(\tau,\tau')$ is the minimum number~$\WI(S)$ among all sequences~$S$ that transform~$\tau$ into~$\tau'$. Observe that~$\text{InvWin}(\tau,\tau') \leq \text{Inv}(\tau,\tau')$ since~$\{[1,n]\}$ is a window partition of every sequence~$S$.

Our algorithm is based on the following intuition: If an ordering~$\tau'$ is~$r$-close to~$\tau$, the ordering can be decomposed into windows in which at most~$r$ inversions are performed and the vertices from distinct windows keep their relative positions. In a bottom-up manner, we compute the best possible ordering of suffixes~$\tau(a,n)$ of~$\tau$ until we find the best possible ordering of~$\tau(1,n)=\tau$. This is done by finding  the first window on the suffix and combining this with an optimal ordering of the remaining vertices of this suffix. To find a solution of the first window we use the algorithm behind Theorem~\ref{Theorem: Inversions Algorithm} for \MWBNSL as a subroutine.

The sub-instances on which we apply the algorithm behind Theorem~\ref{Theorem: Inversions Algorithm} contain the vertices of a \emph{window}~$\tau(a,b)$. Changing the ordering in~$\tau(a,b)$, these vertices may learn a parent set containing vertices from~$\tau(1,b)$. For our purpose, only the new parents in~$\tau(a,b)$ are important, since the new parents in~$\tau(1,a-1)$ are not important to find an optimal ordering of the suffix~$\tau(a,n)$. Intuitively, we hide the parents in~$\tau(1,a-1)$. Formally, we define the restricted local multiscores~$\Fa|_a^b$ by
$$\Fa|_a^b:=\{(P \cap N_\tau(a,b),s, \w) \mid (P,s,\w) \in \Fa(v) {\rm~and~} P \subseteq N_\tau(1,b) \}.$$

\begin{theorem} \label{Theorem: Window Inversions FPT Algorithm}
There exists a randomized algorithm for \wMWBNSL that, in time~$2^{\Oh(\log(r)\sqrt{r})}\cdot \poly(|I|)$, returns an~$\Fa$-valid arc set with weight at most~$k$. With probability at least~$1- \frac{1}{e}$, the returned arc set is a solution.
\end{theorem}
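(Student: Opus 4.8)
The plan is to solve \wMWBNSL by a dynamic program over the suffixes of~$\tau$ that guesses, for each suffix, the first window of a witnessing window partition and solves the corresponding \MWBNSL instance on that window with the randomized subroutine behind Theorem~\ref{Theorem: Inversions Algorithm}. The intuition already stated in the text is exactly this: an~$r$-close ordering decomposes into windows in which at most~$r$ inversions are performed and across which relative positions are preserved, so it suffices to assemble a global solution from per-window solutions in a bottom-up fashion.

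First I would set up a table~$D[a,k']$ for~$a \in [1,n+1]$ and~$k' \in [0,k]$, where~$D[a,k']$ stores the maximum score of an~$\Fa|_a^n$-valid arc set on~$N_\tau(a,n)$ of weight at most~$k'$ that admits a topological ordering of inversion-window distance at most~$r$ from~$\tau(a,n)$. The base case is~$D[n+1,k']:=0$, and the recurrence fixes the first window~$\tau(a,b)$ together with a weight split~$k''$:
\[
D[a,k'] := \max_{b \in [a,n]} \; \max_{k'' \in [0,k']} \bigl( w(a,b,k'') + D[b+1,k'-k''] \bigr),
\]
where~$w(a,b,k'')$ is the optimal score of the \MWBNSL instance~$(N_\tau(a,b),\Fa|_a^b,\tau(a,b),k'',r)$, i.e.\ the best score achievable inside the window~$\tau(a,b)$ using at most~$r$ inversions. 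The sought score is~$D[1,k]$, and the arc set is recovered by traceback. Note that~$\Fa|_a^b$ inherits the empty-parent-set triple~$(\emptyset,s,0)$, so the subroutine applies; the restriction to~$P \subseteq N_\tau(1,b)$ and the intersection~$P \cap N_\tau(a,b)$ are precisely what let a window solution extend to a global one, since parents in earlier windows (positions~$<a$) always precede a window vertex and may be safely hidden, while parents at positions~$>b$ can never be valid.

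The correctness has two directions, matching a combine and a split argument. For the \textbf{combine} direction, any chain of window solutions selected by the recurrence can be concatenated: because inversions inside a window never move a vertex across a window boundary, the concatenated per-window orderings form one topological ordering whose inversion sequence admits the window partition~$\{[a_1,b_1],\dots\}$ of width at most~$r$, so the inversion-window distance is at most~$r$; extending each hidden parent set back to a corresponding super-set in~$\Fa$ yields an~$\Fa$-valid arc set of the claimed total score, as no vertex receives incoming arcs from two distinct windows. For the \textbf{split} direction, an optimal solution~$A$ with witnessing ordering~$\tau'$ and a width-$\le r$ window partition~$W=\{I_1,\dots,I_\ell\}$ decomposes along the left boundaries of the~$I_j$: restricting~$A$ to each window gives a feasible window solution, and the weights and scores add up across windows, so~$D[1,k] \ge \score(A,k)$.

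For the running time and the error probability, there are only~$\Oh(n^2 \cdot k) = \poly(|I|)$ values~$w(a,b,k'')$ to compute, each by one invocation of the subroutine behind Theorem~\ref{Theorem: Inversions Algorithm} in time~$2^{\Oh(\sqrt{r}\log r)}\cdot \poly(|I|)$. The subtlety is that each such call is correct only with probability~$1-\tfrac1e$, whereas the dynamic program needs all~$q(|I|) \in \poly(|I|)$ of them to succeed at once. I would therefore boost each call by increasing its number of internal color-coding repetitions by a factor of~$\Oh(\log q(|I|)) = \Oh(\log|I|)$, driving each call's failure probability below~$\tfrac{1}{e\,q(|I|)}$; a union bound then bounds the probability that some call fails by~$\tfrac1e$. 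The extra~$\Oh(\log|I|)$ factor is absorbed into~$\poly(|I|)$, so the total running time remains~$2^{\Oh(\sqrt{r}\log r)}\cdot \poly(|I|)$. The main obstacle is the correctness of the window decomposition --- verifying in both directions that hiding the parents at positions~$<a$ through~$\Fa|_a^b$ neither discards a valid global solution nor manufactures an infeasible one --- together with carrying out the probability amplification so that the union bound over all subroutine calls goes through without breaking the subexponential bound.
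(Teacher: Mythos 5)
Your proposal is correct and follows essentially the same route as the paper: the identical suffix dynamic program over $T[j,k']$ with a guessed first window and weight split, oracle calls to the $\Inv$-local subroutine of Theorem~\ref{Theorem: Inversions Algorithm}, and the same combine/split correctness argument. The only cosmetic difference is in the probability amplification --- you boost each call with $\Oh(\log |I|)$ extra repetitions and apply a union bound, while the paper repeats each call $x = \poly(|I|)$ times and bounds $(1-e^{-x})^x \geq 1-\frac{1}{e}$ directly --- but both are valid and absorbed into the $\poly(|I|)$ factor.
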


\begin{proof}
Let~$I:=(N,\Fa,\tau,k,r)$ be an instance of~\wMWBNSL. We describe the algorithm in two steps. We first describe a deterministic algorithm that finds a solution of~$I$ in polynomial time when using an oracle that gives solutions of \MWBNSL-instances where the search radius is at most~$r$. Afterwards, we describe how to replace the oracle evaluations with the randomized algorithm behind Theorem~\ref{Theorem: Inversions Algorithm} to obtain a randomized algorithm for~\wMWBNSL with the claimed running time and error probability.

\textit{The oracle algorithm.} We fill a dynamic programming table~$T$ that has entries of the type~$T[j,k']$ with~$j \in [1, n+1]$ and~$k' \in [0, k]$. The idea is that each entry~$T[j,k']$ with~$j \leq n$ stores the score of a solution of the \wMWBNSL-instance
\begin{align*}
I(j,k'):= (N_\tau(j,n), \Fa|_j^n, \tau(j,n), k',r).
\end{align*}

Intuitively, a solution of~$I(j,k')$ gives the best possible ordering of the suffix~$\tau(j,n)$ corresponding to an arc set of maximum weight~$k'$. We next describe how to fill~$T$. As base case we set~$T[n+1,k'] = 0$  for all~$k'$.
The recurrence to compute an entry for~$j \leq n$ is
\begin{align*}
T[j,k'] := \max_{p \in [j, n]} \max_{k'' \leq k'} \Bigl(W(j,p,k'') +T[p+1,k'-k'']\Bigr),
\end{align*}
where~$W(j,p,k'')$ denotes the score of a solution of the \MWBNSL-instance
\begin{align*}
(N_\tau(j,p), \Fa|^p_j, \tau(j,p), k'',r)
\end{align*}
which we obtain by an oracle evaluation. Intuitively, this instance corresponds to the first window of the suffix~$\tau(j,n)$. Observe that we evaluate the oracle on \MWBNSL-instances with search radius~$r$. The score of a solution of~$I$ can be computed by evaluating~$T[1,k]$. The ordering of the corresponding network can be found via traceback.
The correctness follows from the fact  we consider every possible position of the endpoint of the first window on the suffix~$\tau(j,n)$.

\textit{Replacing the Oracle Evaluations.} 
The dynamic programming table has~$(n+1)(k+1)$ entries. Each entry can be computed by using at most~$n \cdot k$ oracle evaluations. Thus, there are at most~$x := (n+1)(k+1) \cdot n \cdot k \in \poly(|I|)$ oracle evaluations.
We replace every oracle evaluation by applying the algorithm behind Theorem~\ref{Theorem: Inversions Algorithm} exactly~$x$ times and keeping a result with maximal score. 

Observe that the algorithm always computes a feasible arc set for~$I$. The probability that all~$x$ repetitions fail to compute a solution of the corresponding \MWBNSL-instance is at most~$(\frac{1}{e})^x$. Consequently, the probability that the correct result of one oracle evaluation is returned is at least~$1- \frac{1}{e^x}$. Therefore, the success probability of the algorithm is at least~$(1-\frac{1}{e^x})^x \geq (1- \frac{1}{e})$. The inequality holds since we have equality in the case of~$x=1$ and the left hand side of the inequality strictly increases when~$x \geq 1$ increases.

We next analyze the running time of the algorithm. As mentioned above, the dynamic programming table has~$(n+1)(k+1) \in \poly(|I|)$ entries. For each entry we apply the algorithm from Theorem~\ref{Theorem: Inversions Algorithm} on~$x$ instances of~\MWBNSL~with search radius~$r$. Since~$x \in \poly(|I|)$ we have a total running time of~$2^{\Oh(\log(r)\sqrt{r})}\cdot \poly(|I|)$.
\end{proof}

\section{Limits of Ordering-Based Local Search} \label{Section: Limits}

\iflong In this section we outline the limits of our framework. \fi As mentioned above, \WBNSL can be used to model Bayesian network learning under additional sparsity constraints like a bounded number of arcs. However, some important sparsity constraints are unlikely to be modeled with our framework, for example sparsity constraints that are posed on the moralized graph~\cite{EG08}:
 The \emph{moralized graph} of a DAG~$D$ is an undirected graph~$\Mo(D):=(V, E_1 \cup E_2)$ with~$V:=N$, $E_1 := \{\{u,v\} \mid (u,v) \in A \}$, and~$E_2 := \{\{u,v\} \mid u \text{ and }v \text{ have a common child in }D \}$. The edges in~$E_2$ are called~\emph{moral edges}. The NP-hard task of Bayesian inference can be solved more efficiently if the moralized graph of the network is treelike, that is, it has small treewidth~\cite{D09}. Thus, it is well motivated to learn a Bayesian network that satisfies a sparsity constraint that provides an upper bound on the treewidth of the moralized~graph.

\begin{table}
\begin{center}
  \begin{tabular}[t]{ll}
    \hline 
Version & Constraint on~$\Mo(D)$\\
\hline
\textsc{B-Edges-BNSL}~\cite{GK20} & bounded number of edges\\
\textsc{B-VC-BNSL}~\cite{KP15} & bounded vertex cover size \\
\textsc{B-DN-BNSL}~\cite{GK20} & bounded dissociation number\\
    \textsc{B-TW-BNSL}~\cite{KP13} & bounded treewidth\\
                         \hline
\end{tabular}
 \end{center}
\caption{\fontsize{10pt}{10pt} \selectfont Further sparsity constraints.}
 \label{Table: Moral Sparsity Constraints}
\end{table}

Table~\ref{Table: Moral Sparsity Constraints} displays versions of BNSL under sparsity constraints for the moralized graph.  We now argue that there is little hope that one can efficiently find improvements of a given DAG by applying changes on its topological ordering. For all problem versions in Table~\ref{Table: Moral Sparsity Constraints} it is NP-hard to learn a DAG of a given score~$t$ even when~$S_{\Fa}$ is a DAG and~$t$ is polynomial in~$n$ and all local scores are integers~\cite{KP13,KP15,GK20}. Observe that an acyclic superstructure has a topological ordering~$\tau$ which is then automatically a topological ordering of every DAG where the parent sets are potential parent sets.  Furthermore, observe that all the constraints on~$\Mo(D)$ from Table~\ref{Table: Moral Sparsity Constraints} are true if~$D$ does not contain arcs. Assume one can find improvements of a given DAG in polynomial time if the radius~$r$ of the local search neighborhood is constant. Then, we can solve instances with acyclic superstructure by setting~$r=0$, starting with an empty DAG and a topological ordering of~$S_\Fa$, and improve the score~$t$ times. This would be a polynomial time algorithm for instances where~$t$ is polynomial in~$n$ and~$S_\Fa$ is a DAG which would imply~$\text{P}=\NP$.

\section{Conclusion}

We initiated the study on parameterized ordering-based local search for the task of learning a Bayesian network structure. We studied four distances~$d$ and classified for which of these distances~\textsc{$d$-Local W-BNSL} is FPT and for which distances it is W[1]-hard.

There are several ways of extending our results that seem interesting topics for future
research. First, besides the experimental motivation mentioned in Section~\ref{Section: Experiments}, this work is purely theoretical. Recall that the running time bottleneck in these preliminary experiments was not the combinatorial explosion in the search radius~$r$ but rather the slow implementation of the insert operation. Thus, one important topic for future work is to investigate how well the theoretical algorithms proposed in this work perform in practice when combined with algorithm engineering tricks like data reduction rules, upper and lower bounds for the solution size, and restarts.

Besides experimental evaluations of the algorithms proposed in this work, it is also interesting to further study theoretical running time improvements. Recall that our subexponential-time algorithms for the inversions distance and the inversion-window distance are closely related to an algorithm for \textsc{FAST}~\cite{ALS09}. Feige~\cite{F09} and Karpinski and Schudy~\cite{KS10} improved this algorithm for \textsc{FAST} by proposing algorithms with running time~$2^{\Oh(\sqrt{k})}\cdot \poly(n)$ instead of~$2^{\Oh(\sqrt{k} \log(k))}\cdot \poly(n)$. Therefore, a natural open question is whether~\textsc{$d$-Local W-BNSL} for~$d \in \{ \text{Inv}, \text{InvWin}\}$ can be solved in~$2^{\Oh(\sqrt{r})} \cdot\poly(|I|)$~time.

Furthermore, the results in this work may be extended by considering further distances. One example for another distance could be a~$q$-swap distance, where a~$q$-swap for some fixed integer~$q$ is a swap of two vertices on the ordering which positions differ by at most~$q$. The~$q$-swap distance between two orderings is then the minimum number of~$q$-swaps needed to transform one ordering into the other. Observe that in case of~$q=1$, a~$q$-swap corresponds to an inversion. Observe that a~$q$-swap can be simulated by performing at most~$q$ inversions. Consequently, using Algorithm behind Theorem~\ref{Theorem: LB Inversions Algorithm} we can compute a solution in~$2^{\Oh(\sqrt{q \cdot r} \log(q \cdot r))} \cdot {|I|}^{\Oh(1)}$~time that is at least as good as the best solution that can be found by performing at most~$r$ $q$-swaps. One first question might be if there is a more efficient algorithm for~\textsc{$q${\rm -swap} Local W-BNSL}. Analogously to~$q$-swaps, one may define~$q$-inserts, where one can remove a vertex at position~$j$ and insert it at a new position~$i$ with~$j-q \leq i \leq j+q$. A hill-climbing strategy that is based on performing single~$q$-insert operatins has previously been studied by Alonso-Barba et al.~\cite{AOP11}.

In Section~\ref{Section: Limits}, we outlined the limits of ordering-based local search. In a nutshell, we discussed variants of BNSL that are NP-hard even on instances with an acyclic superstructure---in other words---instances, where the topological ordering of the solution is known. An idea to tackle such variants of BNSL might be to find parameters~$\ell$ for which they are FPT when the superstructure is acyclic. Then, it would be interesting to study parameterized local search parameterized by~$r+\ell$. Intuitively, there may be efficient algorithms that iterate over all~$r$-close orderings in~$n^{f(r)}$ time and then solve the instances restricted to each such ordering in~$g(\ell) \cdot \poly(|I|)$~time. Maybe it is also possible that this can be improved to an FPT-algorithm for parameterization by~$r+\ell$.

\bibliography{bnsl-local}
\bibliographystyle{plain}


\end{document}